\documentclass[11pt,a4paper,notitlepage]{article}

\setlength\textwidth{145mm}
\setlength\textheight{245mm}
\setlength\oddsidemargin{9mm}
\setlength\evensidemargin{9mm}
\setlength\topmargin{0mm}
\setlength\headsep{0mm}
\setlength\headheight{0mm}


\usepackage{amsmath}        
\usepackage{amsfonts}       
\usepackage{amsthm}         
\usepackage{bbding}         
\usepackage{bm}             
\usepackage{graphicx}       
\usepackage[square,sort,comma,numbers]{natbib}         
\usepackage[nottoc]{tocbibind} 
\usepackage{paralist}       
\usepackage[usenames]{xcolor}  

\usepackage{xfrac}
\usepackage{tikz}
\usepackage{tikz-cd}
\usetikzlibrary{arrows,shapes,positioning}
\usetikzlibrary{decorations.markings}
\usetikzlibrary{arrows, matrix}

\usepackage{wrapfig}
\usepackage{caption}
\captionsetup{width=.8\linewidth}
\usepackage{titlesec}
\usepackage{filecontents}
\usepackage{mathtools}
\usepackage{mdwlist}
\usepackage{mathrsfs}
\usepackage[shortlabels]{enumitem}
\usepackage{amssymb}
\usepackage{textcomp}
\usepackage[toc,page]{appendix}
\usepackage{pgfplots}
\pgfplotsset{compat=1.7}
\usepackage[utf8]{inputenc} 
\usepackage{authblk} 
\usepackage[cal=boondox,calscaled=1.0]{mathalfa}
\usepackage{physics}
\usepackage{float}

\setlength{\parindent}{2em}
\setlength{\parskip}{1em}

\numberwithin{equation}{section}
\numberwithin{figure}{section}
\theoremstyle{definition}
\newtheorem{definition}{Definition}[section]
\newtheorem{observation}{Observation}[section]

\newtheorem{theorem}{Theorem}[section]

\newtheorem{example}{Example}[section]

\usepackage{natbib}
\usepackage[nottoc]{tocbibind}

\title{Quantum Mechanical Observables under a Symplectic Transformation of Coordinates}
\author[1]{Jakub Káninský}

\affil[1]{Charles University, Faculty of Mathematics and Physics, Institute of Theoretical Physics. E-mail address: jakubkaninsky@seznam.cz}

\date{\today}
\setcounter{Maxaffil}{0}

\titleformat{\section}{\normalfont\scshape\large}{\thesection}{2em}{}
\titleformat{\subsection}{\normalfont\scshape\large}{\thesubsection}{1em}{}
\titleformat{\subsubsection}{\itshape\large}{\thesubsubsection}{1em}{}

\renewenvironment{abstract}
 {\small
  \begin{center}
  \end{center}
  \list{}{
    \setlength{\leftmargin}{15mm}%
    \setlength{\rightmargin}{\leftmargin}%
  }
  \item\relax}
 {\endlist}

\begin{document}

\maketitle

\begin{abstract}
We consider a general symplectic transformation (also known as linear canonical transformation) of quantum-mechanical observables in a quantized version of a finite-dimensional system with configuration space isomorphic to $ \mathbb{R}^{q} $. Using the formalism of rigged Hilbert spaces, we define eigenstates for all the observables. Then we work out the explicit form of the corresponding transformation of these eigenstates. A few examples are included at the end of the paper.
\end{abstract}

\vspace{\baselineskip}
\vspace{\baselineskip}

\section{Introduction}
From a mathematical perspective we can view quantum mechanics as a science of finite-dimensional quantum systems, i.e. systems whose classical pre-image has configuration space of finite dimension. The single most important representative from this family is a quantized version of the classical system whose configuration space is isomorphic to $ \mathbb{R}^{q} $ with some $ q \in \mathbb{N} $, which corresponds e.g. to the system of finitely many coupled harmonic oscillators. Classically, the state of such system is given by a point in the phase space, which is a vector space of dimension $ 2q $ equipped with the symplectic form. One would typically prefer to work in the abstract setting, without the need to choose any particular set of coordinates: in principle this is possible. In practice though, one will often end up choosing a symplectic basis in the phase space aligned with the given symplectic form and resort to the coordinate description. Needless to say, there are many equivalent choices of such basis, all mutually related through symplectic transformations.

There is a number of ways to introduce the quantum counterpart of the above system; let us recall two. The first one originates in many body quantum mechanics and comes about by complexifying the classical phase space and performing polarization, which results in the so called one-particle Hilbert space. The polarization is not unique, but its different choices lead to unitarily equivalent theories. The one-particle Hilbert space is then used to build a symmetric Fock space which accommodates the states of our quantum system. The observables are defined with the help of the symplectic form by virtue of creation and annihilation operators. The construction is performed algebraically without any reference to coordinates, which makes it relatively elegant. For further details we refer to Chapter 2 of \cite{Wald1994}. Another option---and a more common one---is the canonical construction in Schr\"{o}dinger representation. It consists in introducing the Hilbert space of square-integrable complex functions on the configuration space and defining the coordinate and momentum observables straightforwardly as multiplication and differentiation operators. Note that the word \textit{configuration space} refers to an implicit choice of symplectic basis needed to fix the representation. The two briefly described constructions can be related by the so called Bargmann transform, see the classical paper \cite{Bargman1961} and Chapter 4 of \cite{Neretin2011}. The isomorphism between the Fock and Bargmann space is proven in \cite{Stochel1997}. Without going to further details, we would like to point out the following fact. In the many-body Fock space construction, one makes a choice of polarization which splits the complexified phase space in half, while in the canonical construction one splits in half the original phase space by a choice of symplectic basis. In any case, the resulting quantum theory is independent of these choices, as we know thanks to Stone-von Neumann Theorem. More elegant approaches are available e.g. within Geometrical Quantization \cite{Carosso2017} or other abstract framework.

Let us quantize our finite-dimensional classical system by means of the canonical procedure. As mentioned above, it involves a choice of symplectic basis in the phase space allowing to establish a well defined complete set of commuting observables. Computations involving quantum states can then be performed in the eigenstate basis associated to the chosen set of commuting observables. A small technical complication is given by the fact that the canonical observables do not have eigenvectors in the original Hilbert space: they are not square integrable functions. One can work around this by introducing a generalized version of the eigenvectors from the realm of distributions. In the present paper, they will be called simply \textit{eigenstates}. One can then directly define eigenstates which correspond to states of the system with a sharp value of the associated observable. There are vectors in the Hilbert space that are arbitrarily close to that.

We have pointed out that quantum systems constructed upon different choices of the symplectic basis are equivalent, mutually related by a symplectic transformation of observables. It induces a corresponding transformation of eigenstates, which is in general nontrivial. Meanwhile, having access to eigenstates of symplectically transformed observables can be very useful for practical purposes. For instance, it may happen that one starts with quantum observables defined in one symplectic basis and later decides to change to another one, e.g. in order to simplify the evolution equations or for any other reason. A typical example of this is switching between the coordinate and momentum representations, which is arguably the most prominent special case of symplectic transformation. As common knowledge tells us, the eigenstates of coordinates and momenta are related by the Fourier transform. However, more general cases are not always easily accessible in the literature. The aim of the present paper is to provide a fully general prescription for the transformation of eigenstates corresponding to an arbitrary symplectic transformation of observables.

We need to remark that symplectic transformations of quantum mechanical observables are well known and have been studied before in various contexts. They belong to the broader family of \textit{canonical transformations} which are defined by their property of leaving the canonical commutation relations invariant \cite{Anderson1994}. In this work, we are only interested in their subset referred to as \textit{linear canonical transformations} which have been historically the subject of extensive research not only in quantum physics but also in optics, because they can be used to describe propagation of light rays. For this historical account, see Chapter 1 of \cite{Healy2016}. The basic definition and properties of linear canonical transformations are given in Chapter 2 of the same reference. A more quantum-mechanical point of view is taken in Chapter 9 of \cite{Wolf1979}. In these, one can find prescriptions for integral kernels of the transformation for the case of two dimensions or various special kinds of the transforms in $ 2q $ dimensions (often those relevant in optics). In principle, these can be composed to obtain more general transforms. However, we prefer to present the problem in quantum-mechanical terms and work out the eigenstates of transformed observables from first principles, without any unnecessary outer input. In doing so, we pay attention not only to the transform itself but also to the quantum mechanical theory. The term \textit{symplectic transformation} is used because we feel it better describes the nature of the transform and emphasizes the important relation of the quantum and the classical system, as already discussed above.

The paper is organized as follows. In Sec. 2 we describe in the necessary detail both the classical and quantum versions of our system and introduce rigged Hilbert spaces which will allow us to take advantage of the Dirac formalism. In Sec. 3 we briefly review the needed algebraic tools. Sec. 4 is dedicated to the computation itself, which will provide general formulas for the eigenstates of symplectically transformed observables. We also examine the resulting wavefunctions. Finally, in the last Sec. 5 we give a couple of examples to illustrate the application of the results.

\section{The Classical System and the Quantum System}
In this section we give a formal description of our quantum system and establish the formalism for the rest of the paper. Let us start with a finite-dimensional classical system whose configuration space $ \mathcal{Q} $ has the natural structure of a vector space, i.e., one may identify $ \mathcal{Q} = \mathbb{R}^{q} $. Then the phase space $ \mathcal{P} = \mathbb{R}^{2q} $ of the system is also a vector space equipped with a symplectic form $ \omega : \mathcal{P} \times \mathcal{P} \rightarrow \mathbb{R} $. Since we will not be interested in the evolution, these are really all the classical structures we need. We take advantage of the basic construction with Hilbert space $ \mathcal{F} = L^{2}(\mathcal{Q}) $ of square-integrable complex functions on $ \mathcal{Q} $ with the usual inner product $ ( ~ , ~ ) : \mathcal{F} \times \mathcal{F} \rightarrow \mathbb{C} $ given by
\begin{equation}\label{key}
(\psi,\varphi) = \int_{\mathbb{R}^{q}} \overline{\psi(x)} \varphi(x) ~ d^{q}x
\end{equation}

We shall choose a symplectic basis $ \{ e_{I} \}_{I = 1}^{2q} $ in $ \mathcal{P} $, so that we can write a vector $ y \in \mathcal{P} $ in coordinates as $ y = y_{I} e_{I} $ with implicit summation over $ I = 1, ..., 2q $. We may then identify the configuration space $ \mathcal{Q} $ with the space spanned by $ \{ e_{A} \}_{A = 1}^{q} $, so that $ x \in \mathcal{Q} $ is written as $ x = x_{A} e_{A} $. This identification of the configuration space is natural, but we must not forget that it is basis-dependent. Next, recall that the assumption of $ \{ e_{I} \}_{I = 1}^{2q} $ being symplectic means
\begin{equation}\label{key}
\omega(e_{I}, e_{J}) = \sigma_{IJ}
\end{equation}
with a $ 2q \times 2q $ matrix $ \sigma $ of the block structure
\begin{equation}\label{key}
\sigma \equiv \begin{pmatrix}
\phantom{-} 0 & \mathbf{1} \\
- \mathbf{1} & 0
\end{pmatrix}
\end{equation}
The symplectic product of two vectors $ y, u \in \mathcal{P} $ then has a simple coordinate form
\begin{equation}\label{symplprod}
\omega(y,u) = y_{A} u_{A+q} - y_{A+q} u_{A}
\end{equation}

Now we introduce the coordinate and momentum operators used to fix the representation. We draw from the discussion in \cite{Madrid2005}. For further details, as well as a general treatment of operators on Hilbert spaces, we refer to the canonical book \cite{Reed1981} and the lecture notes \cite{Landsman2006}. We start with the \textit{coordinate operator} $ \hat{y}_{A} : \mathcal{D}(\hat{y}_{A}) \rightarrow \mathcal{F} $ which shall be given for all $ \varphi \in \mathcal{D}(\hat{y}_{A}) \subset \mathcal{F} $ by
\begin{equation}\label{ymult}
\hat{y}_{A} \varphi(x) = x_{A} \varphi(x)
\end{equation}
where $ x \in \mathcal{Q} $. Note that this definition asserts that the domain $ \mathcal{D}(\hat{y}_{A}) $ of $ \hat{y}_{A} $ must be such that $ \hat{y}_{A} \varphi(x) \in \mathcal{F} $, i.e., we have $ \mathcal{D}(\hat{y}_{A}) = \{ \varphi \in L^{2}(\mathcal{Q}) ~ \vert ~ \int_{\mathbb{R}^{q}} \vert x_{A} \varphi(x) \vert^{2} ~ d^{q}x < \infty \} $. One can find that $ \mathcal{D}(\hat{y}_{A}) $ is not the whole $ \mathcal{F} $ (though it is dense in $ \mathcal{F} $) and $ \hat{y}_{A} \mathcal{D}(\hat{y}_{A}) $ is not included in $ \mathcal{D}(\hat{y}_{A}) $. Nevertheless, we continue by defining the \textit{momentum operator} $ \hat{y}_{A+q} : \mathcal{D}(\hat{y}_{A+q}) \rightarrow \mathcal{F} $ via
\begin{equation}\label{yder}
\hat{y}_{A+q} \varphi(x) = - i \frac{\partial}{\partial x_{A}} \varphi(x)
\end{equation}
As in the case of $ \hat{y}_{A} $, the domain $ \mathcal{D}(\hat{y}_{A+q}) $ of $ \hat{y}_{A+q} $ is only dense in $ \mathcal{F} $ and is not invariant under the action of $ \hat{y}_{A+q} $. In general terms of analysis on Hilbert spaces, one can say that both $ \hat{y}_{A}, \hat{y}_{A+q} $ are unbounded, their spectrum is the whole real line, and they do not have any eigenvectors in $ \mathcal{F} $. Along with that, expectation values of these operators are not finite and algebraic operations such as commutation relations involving these operators are not well defined on the whole $ \mathcal{F} $.

Before we go on to resolve the domain problems of the coordinate and momentum operators, let us choose a representation $ \pi : \mathcal{P} \rightarrow \mathcal{U}(\mathcal{F}) $ of the canonical commutation relations, i.e., a map from the phase space $ \mathcal{P} $ to the space $ \mathcal{U}(\mathcal{F}) $ of unitary operators on $ \mathcal{F} $ satisfying the Weyl relations
\begin{equation}\label{weyl}
	W^{\pi}(y) W^{\pi}(u) = e^{ - i \frac{1}{2} \omega(y,u) } ~ W^{\pi}(y+u)
\end{equation}
\begin{equation}\label{key}
	W^{\pi}(y)^{\dagger} = W^{\pi}(-y)
\end{equation}
For details, see \cite{Derezinski2005}. We opt for the \textit{Schr\"{o}dinger representation} defined by
\begin{equation}\label{Wy}
W(y) = e^{ i \left( y_{A} \hat{y}_{A+q} - y_{A+q} \hat{y}_{A} \right) }
\end{equation}
The resemblance of the exponent in \eqref{Wy} to \eqref{symplprod} is no coincidence. Our choice results in the self-adjoint field operators
\begin{equation}\label{phi}
\phi(y) \equiv \hat{\omega}(y, \cdot ) = y_{A} \hat{y}_{A+q} - y_{A+q} \hat{y}_{A}
\end{equation}
which can be interpreted as the observables associated to the classical linear functions $ \omega(y, \cdot) : \mathcal{P} \rightarrow \mathbb{R} $ that take $ y \in \mathcal{P} $ as a parameter and map $ u \mapsto \omega(y, u) = y_{A} u_{A+q} - y_{A+q} u_{A} $. Note that, in particular, \eqref{ymult} and \eqref{yder} themselves are field operators with $ \hat{y}_{A} = \phi(-e_{A+q}) $ and $ \hat{y}_{A+q} = \phi(e_{A}) $. These are especially useful: according to our physical interpretation, they correspond to coordinates $ y_{A} $ and momenta $ y_{A+q} $, respectively, of a vector $ y \in \mathcal{P} $ in the phase space of the classical system.

\vspace{\baselineskip}

To be able to work safely with the coordinate and momentum operators, we shall use the formalism of rigged Hilbert spaces described concisely in \cite{Madrid2005}. For background on the topic we refer to \cite{Gelfand1964} and \cite{Gadella2002}. A rigged Hilbert space is a triad of spaces $ \Phi \subset \mathcal{F} \subset \Phi^{\times} $ where $ \mathcal{F} $ is a Hilbert space (we plug in our choice straight away), $ \Phi $ is a dense subset of $ \mathcal{F} $ and $ \Phi^{\times} $ is the space of antilinear functionals over $ \Phi $. We define
\begin{equation}\label{Phidef}
\Phi = \bigcap_{\mathfrak{m}} ~ \mathcal{D} \left( \hat{y}_{1}^{m_{1}} ... \hat{y}_{2q}^{m_{2q}} \right)
\end{equation}
where the intersection is taken over all $ m_{I} = 0, ..., \infty $ in the multiindex $ \mathfrak{m} = ( m_{1}, ..., m_{2q} ) $. That is, $ \Phi $ accommodates test functions which are in the domain of any product of the position and momentum operators, so that expectation values and commutation relations of these operators are well defined on $ \Phi $. Moreover, $ \Phi $ turns out to be the largest subdomain of $ \mathcal{F} $ that \textit{remains invariant} under the action of any power of $ \hat{y}_{I} $. Thanks to this invariance, the expectation values $ (\varphi, \hat{y}_{I}^{m_{I}} \varphi) $ for $ \varphi \in \Phi $ are finite. Likewise, for coordinate and momentum operators narrowed to $ \Phi $, the relation \eqref{weyl} yields
\begin{equation}\label{ccr}
[ \hat{y}_{I} , \hat{y}_{J} ] = i \sigma_{IJ} \hat{\mathbf{1}}
\end{equation}
as an identity on $ \Phi $. In particular, it is $ [ \hat{y}_{A} , \hat{y}_{A+q} ] = i \hat{\mathbf{1}} $, which is the familiar commutation relation characteristic of coordinate and momentum operators in quantum mechanics.

The definition \eqref{Phidef} amounts to saying that the norm $ \| ~ \|_{\mathfrak{m}} $ defined by
\begin{equation}\label{mnorm}
\| \varphi \|_{\mathfrak{m}} = \left( \int_{\mathbb{R}^{q}} \vert \hat{y}_{1}^{m_{1}} ... \hat{y}_{2q}^{m_{2q}} \varphi(x) \vert^{2} ~ d^{q}x \right)^{1/2} 
\end{equation}
is finite for all $ \varphi \in \Phi $. An investigation of \eqref{mnorm} identifies $ \Phi $ to be the Schwartz space $ \Phi = \mathcal{S}(\mathbb{R}^{q}) $ of smooth rapidly decreasing functions on $ \mathbb{R}^{q} $, see e.g. the reference \cite{Reed1981}. It can be checked that the operators $ \hat{y}_{I} $, although not bounded---or to say, continuous---with respect to the $ L^{2} $-norm, are bounded with respect to the norm \eqref{mnorm}.

At this point we continue by defining $ \Phi^{\times} $ to be the collection of all \textit{antilinear} functionals over $ \Phi $ bounded w.r.t. \eqref{mnorm}. This definition originates in the theory of distributions (the only difference is that distributions are linear). In particular, for every locally integrable function $ f : \mathbb{R}^{q} \rightarrow \mathbb{R} $ there is a corresponding functional $  F_{f} \in \Phi^{\times} $ of the form
\begin{equation}\label{Ffphi}
F_{f}(\varphi) = \int_{\mathbb{R}^{q}} \overline{\varphi(x)} f(x) ~ d^{q}x
\end{equation}
for all $ \varphi \in \Phi $. As for the elements in $ \Phi^{\times} $ that are not of this form, we will make extensive use of $ \overline{\delta}_{a} \in \Phi^{\times} $ defined by $ \overline{\delta}_{a}(\varphi) = \overline{\varphi(a)} $ with a parameter $ a \in \mathbb{R}^{q} $. This is nothing but the $ q $-dimensional antilinear variation on Dirac delta. We warn the reader that we will commonly abuse notation by adopting the symbol $ \delta^{q}(x-a) $ from the expression
\begin{equation}\label{overlinedeltaa}
\overline{\delta}_{a}(\varphi) \equiv \int_{\mathbb{R}^{q}} \overline{\varphi(x)} \delta^{q}(x-a) ~ d^{q}x
\end{equation}
where the mathematical language still falls short in reflecting Dirac's genius. Moreover, we shall write $  F_{f} \equiv \vert f \rangle $ and $ \overline{\delta}_{a} \equiv \vert a \rangle $ to accommodate for the usual notation $ F_{f}(\varphi) \equiv \langle \varphi \vert f \rangle $ and $ \overline{\delta}_{a}(\varphi) \equiv \langle \varphi \vert a \rangle $. We will generally talk about the elements of $ \Phi^{\times} $ as \textit{(right) states}.

Following the standard recipe described in \cite{Madrid2005}, one introduces another rigged Hilbert space $ \Phi \subset \mathcal{F} \subset \Phi^{\overline{\times}} $ analogical to the one given above, where $ \Phi^{\overline{\times}} $ is defined to be the collection of all \textit{linear} functionals over $ \Phi $ bounded w.r.t. \eqref{mnorm}. In mathematical terms, functionals in $ \Phi^{\overline{\times}} $ are tempered distributions. It comes as little surprise that there is a one-to-one correspondence between $ \Phi^{\times} $ and $ \Phi^{\overline{\times}} $. And again, given a locally integrable function $ f : \mathbb{R}^{q} \rightarrow \mathbb{R} $, there is a corresponding functional $  \overline{F_{f}} \in \Phi^{\overline{\times}} $ of the form
\begin{equation}\label{key}
\overline{F_{f}}(\varphi) = \int_{\mathbb{R}^{q}} \varphi(x) \overline{f(x)} ~ d^{q}x
\end{equation}
As in the preceding case, we also introduce $ \delta_{a} \in \Phi^{\overline{\times}}  $ acting as $ \delta_{a}(\varphi) = \varphi(a) $ which can be put to the same integral form $ \delta_{a}(\varphi) \equiv \int_{\mathbb{R}^{q}} \varphi(x) \delta^{q}(x-a) ~ d^{q}x $ with the aid of Dirac delta function $ \delta^{q}(x-a) $. We shall write $ \overline{F_{f}} \equiv \langle f \vert $ and $ \delta_{a} \equiv \langle a \vert $ to accommodate for the notation $ \overline{F_{f}}(\varphi) \equiv \langle f \vert \varphi \rangle $ and $ \delta_{a}(\varphi) \equiv \langle a \vert \varphi \rangle $. We will generally talk about the elements of $ \Phi^{\overline{\times}} $ as \textit{(left) states}.

For any \textit{self-adjoint} operator $ \hat{A} : \Phi \rightarrow \Phi $, we define the corresponding operator $ \hat{A} : \Phi^{\times} \rightarrow \Phi^{\times} $ by $ (\hat{A} F)(\varphi) = F(\hat{A} \varphi) $ for all $ F \in \Phi^{\times} $ and $ \varphi \in \Phi $. We will say that the functional $ \alpha \in \Phi^{\times} $ is a \textit{(right) eigenstate} of $ \hat{A} $ with eigenvalue $ A \in \mathbb{R} $ if $ \hat{A} \alpha = A \alpha $. This is to be understood as an equality on $ \Phi^{\times} $, i.e., $ (\hat{A} \alpha)(\varphi) = (A \alpha)(\varphi) $ for all $ \varphi \in \Phi $. In Dirac's notation, the above definitions are written as $ \langle \varphi \vert \hat{A} \vert F \rangle = \langle \hat{A} \varphi \vert F \rangle $ and $ \hat{A} \vert \alpha \rangle = A \vert \alpha \rangle $. The definitions for operators on $ \Phi^{\overline{\times}} $ and their \textit{(left) eigenstates} are completely analogical, yielding $ \langle F \vert \hat{A} \vert \varphi \rangle = \langle F \vert \hat{A} \varphi \rangle $ and $ \langle \alpha \vert \hat{A} = A \langle \alpha \vert $. Note that they well apply to $ \hat{y}_{A}, \hat{y}_{A+q} $ since both of them are self-adjoint on $ \Phi $.

\vspace{\baselineskip}

We may finally define the eigenvalues $ \alpha_{A}, \beta_{A} \in \mathbb{R} $ as well as the eigenstates $ \vert \alpha \rangle_{\mathtt{c}}, \vert \beta \rangle_{\mathtt{m}} \in \Phi^{\times} $ of the observables $ \hat{y}_{A}, \hat{y}_{A+q} $, respectively, via the equations
\begin{equation}\label{yAalpha}
\hat{y}_{A} \vert \alpha \rangle_{\mathtt{c}} = \alpha_{A} \vert \alpha \rangle_{\mathtt{c}}
\end{equation}
\begin{equation}\label{yAqalpha}
\hat{y}_{A+q} \vert \beta \rangle_{\mathtt{m}} = \beta_{A} \vert \beta \rangle_{\mathtt{m}}
\end{equation}
We use the subindices in $ \vert \alpha \rangle_{\mathtt{c}} $ and $ \vert \beta \rangle_{\mathtt{m}} $ to signify that they are the coordinate and momentum eigenstates (of the coordinate and momentum observables with respect to the canonical basis), respectively. By virtue of the one-to-one correspondence between $ \Phi^{\times} $ and $ \Phi^{\overline{\times}} $, one also gets the functionals $ _{\mathtt{c}} \langle \alpha \vert, ~ _{\mathtt{m}} \langle \beta \vert \in \Phi^{\overline{\times}} $. These indeed are the left eigenstates of $ \hat{y}_{A}, \hat{y}_{A+q} $ in $ \Phi^{\overline{\times}} $. Upon solving the equations \eqref{yAalpha} and \eqref{yAqalpha} one finds the eigenstates to be
\begin{equation}\label{cm}
\begin{aligned}
\vert \alpha \rangle_{\mathtt{c}} &= \overline{\delta}_{\alpha}\\
\vert \beta \rangle_{\mathtt{m}} &= \sum_{\alpha} (2\pi)^{-q/2} ~ e^{ i \beta_{A} \alpha_{A} } ~ \overline{\delta}_{\alpha}
\end{aligned}
\end{equation}
where we employ the symbolic summation
\begin{equation}\label{sumint}
\sum_{\alpha} \equiv \int_{\mathbb{R}} \prod_{A = 1}^{q} d \alpha_{A}
\end{equation}
We will use this notation throughout the paper for its brevity.

\vspace{\baselineskip}

The above construction provides us with the powerful tool of Dirac's formalism. We shall only shortly explain how it can be understood. The key feature is that every state comes with an integral form obtained formally by writing
\begin{equation}\label{FoverlineG}
\begin{aligned}
F = \sum_{\alpha} \overline{\delta}_{\alpha} \left( \delta_{\alpha} \bullet F \right) \\
\bar{G} = \sum_{\alpha} \left( \bar{G} \bullet \overline{\delta}_{\alpha} \right) \delta_{\alpha}
\end{aligned}
\end{equation}
for any $ F \in \Phi^{\times} $ and $ \bar{G} \in \Phi^{\overline{\times}} $. We shall assume that this form always exists without going into mathematical details. The equations \eqref{FoverlineG} can be viewed as defining relations for $ \delta_{\alpha} \bullet F  $ and $ \bar{G} \bullet \overline{\delta}_{\alpha} $. Note that these have meaning only inside the integral (one could treat them rigorously as measures). For instance, comparing the first row of \eqref{FoverlineG} to \eqref{Ffphi}, we immediately get $ \delta_{\alpha} \bullet F_{f} = f(\alpha) $ where it is understood that $ \alpha \in \mathbb{R}^{q} $ is the designated integration variable. Similarly, from \eqref{overlinedeltaa} it follows $ \delta_{\alpha} \bullet \overline{\delta}_{\gamma} = \delta^{q}(\alpha-\gamma) $ where $ \alpha \in \mathbb{R}^{q} $ is the integration variable and $ \gamma \in \mathbb{R}^{q} $ is a parameter. Yet another way of writing the same is
\begin{equation}\label{key}
\begin{aligned}
\hat{\mathbf{1}} = \sum_{\alpha} \overline{\delta}_{\alpha} \delta_{\alpha} \bullet \\
\hat{\mathbf{1}} = \sum_{\alpha} \bullet \overline{\delta}_{\alpha} \delta_{\alpha}
\end{aligned}
\end{equation}
Note that the symbol $ \hat{\mathbf{1}} $ has two different meanings here: in the first line, it stands for the identity operator on $ \Phi^{\times} $ while in the second line it stands for the identity operator on $ \Phi^{\overline{\times}} $. When we apply the described philosophy on the eigenstates \eqref{cm}, we arrive at the notorious relations
\begin{equation}\label{formfieldmom}
\begin{aligned}
_{\mathtt{c}} \langle \alpha \vert \gamma \rangle _{\mathtt{c}} &= \delta^{q}(\alpha - \gamma) \\
_{\mathtt{c}} \langle \alpha \vert \beta \rangle_{\mathtt{m}} &= (2\pi)^{-q/2} ~ e^{ i \beta_{A} \alpha_{A} } \\
\hat{\mathbf{1}} &= \sum_{\alpha} \vert \alpha \rangle_{\mathtt{c}} ~ _{\mathtt{c}} \langle \alpha \vert \\
\hat{\mathbf{1}} &= \sum_{\beta} \vert \beta \rangle _{\mathtt{m}} ~ _{\mathtt{m}} \langle \beta \vert
\end{aligned}
\end{equation}
here reproduced in Dirac's notation. For the lack of a better name, the objects $ _{\mathtt{c}} \langle \alpha \vert \gamma \rangle _{\mathtt{c}} $ and $ _{\mathtt{c}} \langle \alpha \vert \beta \rangle_{\mathtt{m}} $ will be called \textit{wavefunctions}, although they are \textit{not} functions. Instead, they must be understood as abstract expressions of the form $ \delta_{\alpha} \bullet F  $ which have a specific effect on the integral that they happen to be part of.

\section{Algebraic Preliminaries}
In this section we briefly review some basic tools and results from linear algebra that will be indispensable for our work. The first topic of interest shall be the singular value decomposition and Moore-Penrose pseudoinverse, whose treatment will be based on the reference \cite{Golub2012}. Then we shortly remind the elementary properties of symplectic matrices which can be found in \cite{Gosson2006}.

\begin{theorem}
	Let $ A \in \mathbb{R}^{m \times n}  $ be an $ m \times n $ matrix with $ m \geq n $. Then there exist orthogonal matrices $ U \in \mathbb{R}^{m \times m} $ and $ V \in \mathbb{R}^{n \times n} $ and a matrix $ \Sigma = \begin{pmatrix}
	\text{diag}(\sigma_{1}, ..., \sigma_{n}) \\
	0
	\end{pmatrix} \in \mathbb{R}^{m \times n} $ with $ \sigma_{1} \geq \sigma_{2} \geq ... \geq \sigma_{n} \geq 0 $, such that
	\begin{equation}\label{key}
	A = U \Sigma V^{T}
	\end{equation}
	The numbers $ \sigma_{1}, ..., \sigma_{n} $ are called \textit{singular values} of $ A $. If $ \sigma_{r} > 0 $ is the smallest nonzero singular value, then the matrix $ A $ has rank $ r $.
\end{theorem}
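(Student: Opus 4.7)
The plan is to deduce the decomposition from the spectral theorem applied to the symmetric positive semi-definite matrix $A^{T} A \in \mathbb{R}^{n \times n}$. Since $A^{T} A$ is symmetric, it admits an orthonormal eigenbasis $\{v_{1}, \dots, v_{n}\} \subset \mathbb{R}^{n}$; since it is positive semi-definite (because $v^{T} A^{T} A v = \|Av\|^{2} \geq 0$), its eigenvalues are non-negative, and I can order them as $\lambda_{1} \geq \lambda_{2} \geq \cdots \geq \lambda_{n} \geq 0$. I then define the singular values by $\sigma_{i} = \sqrt{\lambda_{i}}$ and collect the eigenvectors as columns of the orthogonal matrix $V = (v_{1}, \dots, v_{n})$.

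Next I construct $U$. Let $r$ be the number of strictly positive $\sigma_{i}$. For $i \leq r$, set
\begin{equation*}
u_{i} = \frac{1}{\sigma_{i}} A v_{i} \in \mathbb{R}^{m}.
\end{equation*}
A direct computation using $A^{T} A v_{j} = \sigma_{j}^{2} v_{j}$ shows $u_{i}^{T} u_{j} = \sigma_{i}^{-1} \sigma_{j}^{-1} v_{i}^{T} A^{T} A v_{j} = \delta_{ij}$, so $\{u_{1}, \dots, u_{r}\}$ is orthonormal. I then extend this set by Gram--Schmidt to a full orthonormal basis $\{u_{1}, \dots, u_{m}\}$ of $\mathbb{R}^{m}$, and assemble the orthogonal matrix $U = (u_{1}, \dots, u_{m})$.

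To verify $A = U \Sigma V^{T}$, equivalently $A V = U \Sigma$, I check the identity column-by-column. For $i \leq r$ the $i$-th column of $AV$ is $A v_{i} = \sigma_{i} u_{i}$, matching the $i$-th column of $U \Sigma$. For $i > r$ I must show $A v_{i} = 0$; this follows because $\|A v_{i}\|^{2} = v_{i}^{T} A^{T} A v_{i} = \lambda_{i} = 0$. Both sides therefore agree on every $v_{i}$, yielding the decomposition.

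Finally, for the rank statement, the identity $A = U \Sigma V^{T}$ with $U, V$ invertible gives $\operatorname{rank} A = \operatorname{rank} \Sigma$, and $\Sigma$ has exactly $r$ nonzero rows, so $\operatorname{rank} A = r$. The only subtle point, and the one requiring the most care, is the construction of $U$ when some $\sigma_{i}$ vanish: the vectors $u_{i}$ are then undefined by the formula $A v_{i} / \sigma_{i}$, and one must instead invoke Gram--Schmidt to complete the orthonormal set. Everything else reduces to the spectral theorem plus bookkeeping.
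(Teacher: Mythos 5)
Your proof is correct. Note that the paper itself offers no proof of this theorem: it is quoted as a standard result with a citation to Golub and Van Loan, so there is no in-paper argument to compare against. Your route---diagonalizing the symmetric positive semi-definite matrix $A^{T}A$ by the spectral theorem, setting $\sigma_{i}=\sqrt{\lambda_{i}}$, building $u_{i}=Av_{i}/\sigma_{i}$ for the nonzero singular values, and completing to an orthonormal basis of $\mathbb{R}^{m}$---is the classical textbook derivation, and every step checks out: the orthonormality computation $u_{i}^{T}u_{j}=\sigma_{i}^{-1}\sigma_{j}^{-1}v_{i}^{T}A^{T}Av_{j}=\delta_{ij}$ is right, the column-by-column verification of $AV=U\Sigma$ correctly handles the $i>r$ case via $\norm{Av_{i}}^{2}=\lambda_{i}=0$, and the rank claim follows from the invertibility of $U$ and $V$. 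You correctly flag the only delicate point, namely that $U$ must be completed arbitrarily on the orthogonal complement of $\mathrm{span}\{u_{1},\dots,u_{r}\}$ when some singular values vanish; this is also the source of the non-uniqueness of the decomposition that the paper remarks on after the theorem statement. One cosmetic point: you should order the completion so that the extended vectors occupy columns $r+1,\dots,m$, which your indexing already implies, and you may want to note explicitly that the ordering $\lambda_{1}\geq\cdots\geq\lambda_{n}$ guarantees the required monotonicity $\sigma_{1}\geq\cdots\geq\sigma_{n}\geq 0$ of the singular values.
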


The assumption $ m \geq n $ is used here for simplicity, the singular value decomposition exists for any matrix. Nevertheless, since we will be interested in square matrices, the given formulation is more than sufficient. We also remark that the decomposition is not unique---only the matrix $ \Sigma $ is uniquely determined by $ A $.

We will take advantage of the notation $ U = ( U_{1} ~ U_{2} ) $ and $ V = ( V_{1} ~ V_{2} ) $ with $ U_{1} \in \mathbb{R}^{m \times r} $, $ U_{2} \in \mathbb{R}^{m \times n-r} $, $ V_{1} \in \mathbb{R}^{n \times r} $ and $ V_{2} \in \mathbb{R}^{n \times n-r} $, and further denote $ \Sigma_{r} = \text{diag}(\sigma_{1}, ..., \sigma_{r}) \in \mathbb{R}^{r \times r} $. Then one can write
\begin{equation}\label{key}
A = \begin{pmatrix} U_{1} & U_{2} \end{pmatrix} \begin{pmatrix}
\Sigma_{r} & 0 \\
0 & 0
\end{pmatrix} \begin{pmatrix}
V_{1}^{T} \\
V_{2}^{T}
\end{pmatrix} = U_{1} \Sigma_{r} V_{1}^{T}
\end{equation}

Next we define the Moore-Penrose pseudoinverse as follows:
\begin{definition}\label{MP}
	Let $ A \in \mathbb{R}^{m \times n} $ be a matrix and $ A = U \Sigma V^{T} = U_{1} \Sigma_{r} V_{1}^{T} $ its (narrowed) singular value decomposition with $ \Sigma_{r} = \text{diag} ( \sigma_{1}, ..., \sigma_{r} ) \in \mathbb{R}^{r \times r} $. Then the matrix $ A^{+} = V \Sigma^{+} U^{T} = V_{1} \Sigma_{r}^{+} U_{1}^{T} $ with $ \Sigma^{+} = \begin{pmatrix}
	\Sigma_{r}^{+} & 0 \\
	0 & 0
	\end{pmatrix} \in \mathbb{R}^{n \times m} $ and $ \Sigma_{r}^{+} = \text{diag} ( \sigma_{1}^{-1}, ..., \sigma_{r}^{-1}) \in \mathbb{R}^{r \times r} $ is called the \textit{Moore-Penrose pseudoinverse} of $ A $.
\end{definition}

\begin{theorem}\label{Penrose}
	(Penrose Equations). The Moore-Penrose pseudoinverse $ A^{+} $ of $ A $ is the only solution of the matrix equations
	\begin{equation}\label{key}
	\begin{aligned}
	&\text{(i)} ~ A A^{+} A = A & \qquad &\text{(iii)} ~ ( A A^{+} )^{T} = A A^{+} \\
	&\text{(ii)} ~ A^{+} A A^{+} = A^{+} & \qquad &\text{(iv)} ~ ( A^{+} A )^{T} = A^{+} A
	\end{aligned}
	\end{equation}
\end{theorem}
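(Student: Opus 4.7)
The plan is to split the proof into two parts: first verify that the matrix $A^{+}$ given by Definition \ref{MP} does satisfy the four Penrose conditions, and then show that no other matrix can. For the existence part, I would use the narrowed SVD $A = U_{1}\Sigma_{r}V_{1}^{T}$ together with $A^{+} = V_{1}\Sigma_{r}^{-1}U_{1}^{T}$, and exploit the orthogonality relations $U_{1}^{T}U_{1} = \mathbf{1}_{r}$ and $V_{1}^{T}V_{1} = \mathbf{1}_{r}$ inherited from the orthogonal matrices $U$ and $V$ in the full decomposition. A direct computation then gives $AA^{+} = U_{1}U_{1}^{T}$ and $A^{+}A = V_{1}V_{1}^{T}$, which are manifestly symmetric, yielding (iii) and (iv); multiplying once more by $A$ or $A^{+}$ and again using $U_{1}^{T}U_{1} = V_{1}^{T}V_{1} = \mathbf{1}_{r}$ gives (i) and (ii).

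The bulk of the work, and the subtler step, is the uniqueness assertion. Here I would suppose that two matrices $B_{1}, B_{2} \in \mathbb{R}^{n \times m}$ both satisfy (i)--(iv) with respect to $A$, and derive $B_{1} = B_{2}$ by a purely algebraic manipulation making use of all four equations. The key chain is
\begin{equation*}
B_{1} = B_{1}AB_{1} = B_{1}(AB_{1})^{T} = B_{1}B_{1}^{T}A^{T} = B_{1}B_{1}^{T}(AB_{2}A)^{T} = B_{1}(AB_{1})^{T}(AB_{2})^{T} = B_{1}AB_{1}AB_{2} = B_{1}AB_{2},
\end{equation*}
where I first use (ii) for $B_{1}$, then (iii) for $B_{1}$, then (i) for $B_{2}$ inserted under the transpose, then (iii) for both $B_{1}$ and $B_{2}$, and finally (ii) for $B_{1}$ once more. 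A symmetric computation starting from $B_{2} = B_{2}AB_{2}$ and using (iv) in place of (iii), together with (i) for $B_{1}$, yields $B_{2} = B_{1}AB_{2}$. Combining the two identities gives $B_{1} = B_{2}$.

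The main obstacle is essentially bookkeeping: one has to track carefully which of the four Penrose equations is invoked at each substitution, since the role of conditions (iii) and (iv) in the two directions of the argument is not symmetric between $B_{1}$ and $B_{2}$. Once the two chains above are set down correctly, both parts of the theorem follow. I would also remark that, although the SVD of $A$ is not unique, the matrix $A^{+}$ \emph{is} uniquely determined by $A$, which in fact is an immediate corollary of the uniqueness assertion just proved.
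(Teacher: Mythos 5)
Your proof is correct. Note that the paper itself states this theorem without proof, citing it from Golub and Van Loan, so there is no in-paper argument to compare against; your two-part argument (direct verification of (i)--(iv) from the narrowed SVD, giving $AA^{+}=U_{1}U_{1}^{T}$ and $A^{+}A=V_{1}V_{1}^{T}$, followed by the standard algebraic uniqueness chain $B_{1}=B_{1}AB_{2}=B_{2}$) is the classical textbook proof, and every step in your chain checks out. Your closing remark is also well taken: since the four Penrose equations determine their solution uniquely, the pseudoinverse is independent of the particular (non-unique) SVD used in Definition \ref{MP} to construct it, a point the paper glosses over when it notes that only $\Sigma$ is uniquely determined by $A$.
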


Let us now remind the fundamental spaces associated to a matrix $ A $, together with their basic properties, and provide the corresponding projectors in terms of the singular value decomposition.
\begin{definition}\label{rowcolnull}
	We define the following fundamental spaces:
	\begin{enumerate}[{(1)}]
		\item $ \mathcal{R}(A) = \{ y ~ \vert ~ \exists x \in \mathbb{R}^{n} : y = A x \} \subset \mathbb{R}^{m}  $ is the \textit{range} or \textit{column space}.
		\item $ \mathcal{R}(A^{T}) = \{ z ~ \vert ~ \exists y \in \mathbb{R}^{n} : z = A^{T} y \} \subset \mathbb{R}^{n}  $ is the \textit{row space}.
		\item $ \mathcal{N}(A) = \{ x ~ \vert ~ A x = 0 \} \subset \mathbb{R}^{n}  $ is the \textit{null space}.
	\end{enumerate}
\end{definition}

\begin{theorem}
	The following relations hold:
	\begin{enumerate}[{(i)}]
		\item $ \mathcal{R}(A)^{\perp} = \mathcal{N}(A^{T}) $, therefore $ \mathbb{R}^{m} = \mathcal{R}(A) \oplus \mathcal{N}(A^{T}) $.
		\item $ \mathcal{R}(A^{T})^{\perp} = \mathcal{N}(A) $, therefore $ \mathbb{R}^{n} = \mathcal{R}(A^{T}) \oplus \mathcal{N}(A) $.
	\end{enumerate}
\end{theorem}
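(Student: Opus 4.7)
The plan is to prove the two statements by directly identifying the orthogonal complement of the range with the null space of the transpose, using nothing more than the adjoint relation $\langle A x, y \rangle = \langle x, A^{T} y \rangle$ on $\mathbb{R}^{m}$ and $\mathbb{R}^{n}$ with the standard Euclidean inner product. Once the equality $\mathcal{R}(A)^{\perp} = \mathcal{N}(A^{T})$ is in hand, the direct sum $\mathbb{R}^{m} = \mathcal{R}(A) \oplus \mathcal{N}(A^{T})$ is an immediate consequence of the elementary fact that any closed subspace $W$ of a finite-dimensional inner product space satisfies $\mathbb{R}^{m} = W \oplus W^{\perp}$; here $\mathcal{R}(A)$ is closed because it is finite-dimensional.

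For (i), I would argue by double inclusion. If $y \in \mathcal{R}(A)^{\perp}$ then by definition $\langle A x, y \rangle = 0$ for every $x \in \mathbb{R}^{n}$; rewriting this as $\langle x, A^{T} y \rangle = 0$ for every $x$ forces $A^{T} y = 0$, so $y \in \mathcal{N}(A^{T})$. Conversely, if $A^{T} y = 0$ then $\langle A x, y \rangle = \langle x, A^{T} y \rangle = 0$ for all $x \in \mathbb{R}^{n}$, so $y$ annihilates every vector in $\mathcal{R}(A)$, i.e., $y \in \mathcal{R}(A)^{\perp}$. Combining this with the orthogonal decomposition yields the direct sum stated in (i). Part (ii) is then obtained at once by applying (i) to $A^{T}$ in place of $A$ and invoking $(A^{T})^{T} = A$, which gives $\mathcal{R}(A^{T})^{\perp} = \mathcal{N}(A)$ and the corresponding direct sum in $\mathbb{R}^{n}$.

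There is essentially no obstacle: the proof is a one-line adjoint computation plus the orthogonal decomposition theorem. The only thing worth emphasizing is internal consistency with the SVD language just introduced; one can alternatively read off the result from $A = U_{1} \Sigma_{r} V_{1}^{T}$, since then $\mathcal{R}(A) = \mathrm{span}(U_{1})$ while $\mathcal{N}(A^{T}) = \mathrm{span}(U_{2})$ (because $A^{T} y = V_{1} \Sigma_{r} U_{1}^{T} y$ vanishes iff $U_{1}^{T} y = 0$), and the orthonormality of the columns of $U = (U_{1} \; U_{2})$ immediately gives both the orthogonality and the direct sum; the analogous argument with the roles of $U$ and $V$ exchanged yields (ii). I would present the adjoint version as the proof proper since it does not rely on a particular factorization of $A$.
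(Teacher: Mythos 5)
Your proof is correct. The paper itself gives no proof of this theorem: it is stated without argument in the algebraic preliminaries as a standard fact quoted from the linear-algebra reference, so there is nothing to compare against. Your double-inclusion argument via the adjoint identity $\langle Ax, y\rangle = \langle x, A^{T}y\rangle$ (with the key step of taking $x = A^{T}y$ to force $A^{T}y = 0$), followed by the finite-dimensional orthogonal decomposition and the substitution $A \mapsto A^{T}$ for part (ii), is the standard and complete proof; the SVD-based alternative you sketch is also valid and meshes well with the projector formulas the paper states immediately afterwards.
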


\begin{theorem}
	The projectors to the spaces of Definition \ref{rowcolnull} are given by
	\begin{equation}\label{key}
	\begin{aligned}
	& P_{\mathcal{R}(A)} = AA^{+} & \qquad & P_{\mathcal{R}(A^{T})} = A^{+} A \\
	& P_{\mathcal{N}(A^{T})} = \mathbf{1} - AA^{+} & \qquad & P_{\mathcal{N}(A)} = \mathbf{1} - A^{+} A
	\end{aligned}
	\end{equation}
	Alternatively, using the singular value decomposition,
	\begin{equation}\label{key}
	\begin{aligned}
	& P_{\mathcal{R}(A)} = U_{1} U_{1}^{T} & \qquad & P_{\mathcal{R}(A^{T})} = V_{1} V_{1}^{T} \\
	& P_{\mathcal{N}(A^{T})} = U_{2} U_{2}^{T} & \qquad & P_{\mathcal{N}(A)} = V_{2} V_{2}^{T}
	\end{aligned}
	\end{equation}
\end{theorem}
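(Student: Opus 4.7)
The plan is to establish the singular value decomposition (SVD) expressions first, since they make the geometric content transparent, and then reduce the pseudoinverse expressions to them by direct computation. Each candidate matrix in the SVD form is manifestly symmetric, so to identify it as an orthogonal projector it suffices to verify idempotency and identify its image.

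First I would handle $P_{\mathcal{R}(A)} = U_{1} U_{1}^{T}$. The orthogonality of $U$ implies $U^{T} U = \mathbf{1}$, which in block form gives $U_{1}^{T} U_{1} = \mathbf{1}_{r}$; hence $(U_{1} U_{1}^{T})(U_{1} U_{1}^{T}) = U_{1} U_{1}^{T}$. The image is the column space of $U_{1}$, and to see this equals $\mathcal{R}(A)$ I use $A = U_{1} \Sigma_{r} V_{1}^{T}$: since $\Sigma_{r}$ is diagonal with strictly positive entries (hence invertible) and $V_{1}^{T}$ has full row rank $r$, the map $x \mapsto \Sigma_{r} V_{1}^{T} x$ is surjective onto $\mathbb{R}^{r}$, so $\mathcal{R}(A)$ coincides with the column space of $U_{1}$. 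Applying the same argument to $A^{T} = V_{1} \Sigma_{r} U_{1}^{T}$ yields $P_{\mathcal{R}(A^{T})} = V_{1} V_{1}^{T}$. For the null-space projectors, the orthogonality of $U$ and $V$ gives the decompositions $\mathbf{1}_{m} = U_{1} U_{1}^{T} + U_{2} U_{2}^{T}$ and $\mathbf{1}_{n} = V_{1} V_{1}^{T} + V_{2} V_{2}^{T}$. Combined with $\mathbb{R}^{m} = \mathcal{R}(A) \oplus \mathcal{N}(A^{T})$ and $\mathbb{R}^{n} = \mathcal{R}(A^{T}) \oplus \mathcal{N}(A)$ from the preceding theorem, this immediately identifies $P_{\mathcal{N}(A^{T})} = \mathbf{1} - U_{1} U_{1}^{T} = U_{2} U_{2}^{T}$ and $P_{\mathcal{N}(A)} = \mathbf{1} - V_{1} V_{1}^{T} = V_{2} V_{2}^{T}$.

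To obtain the pseudoinverse form, I would compute directly from $A = U_{1} \Sigma_{r} V_{1}^{T}$ and $A^{+} = V_{1} \Sigma_{r}^{+} U_{1}^{T}$ as supplied by Definition \ref{MP}. Using $V_{1}^{T} V_{1} = \mathbf{1}_{r}$ and $\Sigma_{r} \Sigma_{r}^{+} = \mathbf{1}_{r}$ (the latter because $\Sigma_{r}$ is invertible), one finds $A A^{+} = U_{1} \Sigma_{r} V_{1}^{T} V_{1} \Sigma_{r}^{+} U_{1}^{T} = U_{1} U_{1}^{T}$. The analogous computation with $U_{1}^{T} U_{1} = \mathbf{1}_{r}$ gives $A^{+} A = V_{1} V_{1}^{T}$, and the complementary identities $\mathbf{1} - A A^{+} = U_{2} U_{2}^{T}$ and $\mathbf{1} - A^{+} A = V_{2} V_{2}^{T}$ then follow from the block decompositions of $\mathbf{1}_{m}$ and $\mathbf{1}_{n}$ noted above.

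There is no serious obstacle; the proof is a routine unpacking of the SVD. The only point requiring mild care is the verification that the image of $U_{1} U_{1}^{T}$ is exactly $\mathcal{R}(A)$ rather than a strict subspace of the column span of $U$, which hinges on the invertibility of $\Sigma_{r}$ guaranteed by the convention $\sigma_{1} \geq \dots \geq \sigma_{r} > 0$ adopted when narrowing the SVD.
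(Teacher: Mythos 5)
Your proof is correct: the idempotency-plus-symmetry verification of $U_{1}U_{1}^{T}$ and $V_{1}V_{1}^{T}$, the identification of their images via the invertibility of $\Sigma_{r}$ and the surjectivity of $V_{1}^{T}$, and the direct computation $AA^{+}=U_{1}U_{1}^{T}$, $A^{+}A=V_{1}V_{1}^{T}$ are all sound and complete. The paper itself states this theorem without proof, deferring to its linear-algebra reference, so there is no internal argument to compare against; your write-up is the standard one and fills that gap correctly.
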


The Moore-Penrose pseudoinverse is of great importance to us because it can be readily used to write an explicit solution to a general linear set of equations. Consider the matrix problem
\begin{equation}\label{Axb}
A x = b
\end{equation}
with $ A \in \mathbb{R}^{m \times n} $ a matrix, $ x \in \mathbb{R}^{n} $ and $ b \in \mathbb{R}^{m} $. The equation is consistent, and therefore has a solution for $ x $, only if $ b \in \mathcal{R}(A) $. This condition (sometimes also referred to as \textit{constraint}) can be equivalently expressed as
\begin{equation}\label{Axbconstr}
U_{2} U_{2}^{T} b = 0
\end{equation}
where we employed the narrowed singular value decomposition $ A = U_{1} \Sigma_{r} V_{1}^{T} $ and projected the equation \eqref{Axb} onto $ \mathcal{N}(A^{T}) $ via $ P_{\mathcal{N}(A^{T})} = U_{2} U_{2}^{T} $. There are two special cases in which the constraint is satisfied automatically, namely $ b = 0 $ and $ \mathcal{R}(A) = \mathbb{R}^{m} $.

If the constraint \eqref{Axbconstr} holds, there is a family of solutions for $ x $ of the form
\begin{equation}\label{key}
x = A^{+} b + V_{2} c
\end{equation}
where $ c \in \mathbb{R}^{s} $ is an arbitrary vector of dimension $ s \equiv n - r $. We remark that this is exactly the solution of the linear least squares problem $ A x \approx b $ which comes around by projecting the right-hand side of \eqref{Axb} onto $ \mathcal{R}(A) $ and thus solving the equation $ A x = A A^{+} b $ rather than \eqref{Axb}. One can see that this is equivalent to simply ignoring the constraint \eqref{Axbconstr}.

\vspace{\baselineskip}

In the rest, we shall briefly recall the definition of a symplectic matrix and review its elementary properties.
\begin{definition}
	A \textit{symplectic matrix} $ W $ is a real $ 2Q \times 2Q $ matrix satisfying
	\begin{equation}\label{key}
	W^{T} \sigma W = \sigma
	\end{equation}
\end{definition}
with
\begin{equation}\label{key}
\sigma = \begin{pmatrix}
\phantom{-} 0 & \mathbf{1} \\
-\mathbf{1} & 0
\end{pmatrix}
\end{equation}

\begin{theorem}\label{sympl}
	Let us denote
	\begin{equation}\label{key}
	W = \begin{pmatrix}
	E & F \\
	G & H
	\end{pmatrix}
	\end{equation}
	where $ E, F, G, H $ are real $ Q \times Q $ matrices. Then the following conditions are equivalent:
	\begin{enumerate}
		\item The matrix $ W $ is symplectic.
		\item $ E^{T}G $, $ F^{T} H $ are symmetric and $ E^{T} H - G^{T} F = \mathbf{1} $
		\item $ E F^{T} $, $ G H^{T} $ are symmetric and $ E H^{T} - F G^{T} = \mathbf{1} $
	\end{enumerate}
\end{theorem}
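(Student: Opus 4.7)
The plan is to prove both equivalences by direct block--matrix computation, exploiting the very simple structure of $\sigma$.

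For 1 $\Leftrightarrow$ 2, I would multiply out $W^{T} \sigma W$ in block form. Since $\sigma$ merely swaps block rows with a sign, $\sigma W$ is immediately $\begin{pmatrix} G & H \\ -E & -F \end{pmatrix}$, and pairing with $W^{T}$ yields a $2\times 2$ block matrix whose entries are bilinear in $E,F,G,H$ and their transposes. Comparing block by block with $\sigma$, the upper--left block gives $E^{T}G = G^{T}E$ (i.e.\ $E^{T}G$ symmetric), the lower--right gives $F^{T}H = H^{T}F$ (i.e.\ $F^{T}H$ symmetric), and the upper--right gives $E^{T}H - G^{T}F = \mathbf{1}$. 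The lower--left block condition is just the transpose of the upper--right one and is therefore automatic, so these three relations exhaust condition 2.

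For 1 $\Leftrightarrow$ 3, I would first record the auxiliary fact that $W^{T} \sigma W = \sigma$ is equivalent to the mirror identity $W \sigma W^{T} = \sigma$. The quickest route to this is to note that $W^{T} \sigma W = \sigma$ forces $W$ to be invertible (since $\sigma$ is), and then to use $\sigma^{-1} = -\sigma$ to invert the defining relation and solve for $W \sigma W^{T}$. The same block--multiplication trick then applied to $W \sigma W^{T}$ produces condition 3 entry by entry. Equivalently, and perhaps more elegantly, one may observe that condition 3 is exactly what condition 2 becomes once the blocks of $W^{T}$ (namely $E^{T}, G^{T}, F^{T}, H^{T}$) are substituted into the slots occupied by $E, F, G, H$; hence 1 $\Leftrightarrow$ 3 reduces to 1 $\Leftrightarrow$ 2 applied to $W^{T}$, combined with the $W \leftrightarrow W^{T}$ symmetry just established.

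There is no genuine conceptual obstacle here; the only pitfall is purely clerical, namely keeping signs and transposes straight in the $2\times 2$ block products. I would therefore organize the write--up so that the $W^{T}\sigma W$ computation is carried out once in full detail and the $W \sigma W^{T}$ computation either invoked by symmetry or given in abbreviated form.
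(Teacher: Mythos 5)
Your proposal is correct. Note that the paper itself gives no proof of this theorem --- it is stated as a review of standard facts with a citation to the de Gosson reference --- so there is nothing to compare against; your direct block computation is the standard argument. All the steps check out: $W^{T}\sigma W = \begin{pmatrix} E^{T}G - G^{T}E & E^{T}H - G^{T}F \\ F^{T}G - H^{T}E & F^{T}H - H^{T}F \end{pmatrix}$ matched against $\sigma$ yields exactly condition 2 with the lower-left block redundant, the passage from $W^{T}\sigma W = \sigma$ to $W\sigma W^{T} = \sigma$ via invertibility and $\sigma^{-1} = -\sigma$ is sound, and your observation that condition 3 is condition 2 applied to $W^{T}$ is a clean way to close the loop.
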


It follows from condition 2. that the inverse of a symplectic matrix $ W $ is
\begin{equation}\label{W-1}
W^{-1} = \begin{pmatrix}
\phantom{-}H^{T} & -F^{T} \\
-G^{T} & \phantom{-}E^{T}
\end{pmatrix}
\end{equation}

\section{The Symplectic Transformation of Observables}

Let us consider the following problem. In the phase space $ \mathcal{P} $ we have the canonical (symplectic) basis $ \{ e_{J} \}_{J = 1}^{2q} $ and the quantum observables $ \hat{y}_{A} $ and $ y_{A+q} $ measure the coordinates and momenta with respect to this basis. Now suppose we are given a new symplectic basis $ \{ g_{I} \}_{I = 1}^{2q} $ of $ \mathcal{P} $ defined by a linear transformation
\begin{equation}\label{key}
e_{J} = g_{I} W_{IJ}
\end{equation}
with a $ 2q \times 2q $ \textit{symplectic} matrix of real coefficients $ W_{IJ} $. We will denote
\begin{equation}\label{W}
W = \begin{pmatrix}
E & F \\
G & H
\end{pmatrix}
\end{equation}
where $ E, F, G, H $ are real $ q \times q $ matrices. A vector $ y \in \mathcal{P} $ may be written as $ y = y_{J} e_{J} = w_{I} g_{I} $, with a linear (symplectic) coordinate transformation of the form
\begin{equation}\label{key}
w_{I} = W_{IJ} ~ y_{J}
\end{equation}
Our aim is to transform the observables on $ \Phi $ accordingly:
\begin{equation}\label{wsympl}
\hat{w}_{I} = W_{IJ} ~ \hat{y}_{J}
\end{equation}
One expects that the canonical commutation relations will not be touched by the symplectic transformation \eqref{wsympl}, since, in their nature, they are nothing but a quantum variation on the symplectic form $ \omega $. To check this explicitly, first observe that (as immediately follows from Theorem \ref{sympl}) $ W $ is symplectic $ \Leftrightarrow  W^{T} $ is symplectic, i.e., it holds $ W \sigma W^{T} = \sigma $. Then we easily find
\begin{equation}\label{key}
[ \hat{w}_{I} , \hat{w}_{J} ] = W_{IK} W_{JL} [ \hat{y}_{K} , \hat{y}_{L} ] = W_{IK} W_{JL} i \sigma_{KL} \hat{\mathbf{1}} = i (W \sigma W^{T})_{IJ} \hat{\mathbf{1}} = i \sigma_{IJ} \hat{\mathbf{1}}
\end{equation}
which is the exact same form as \eqref{ccr}.

Before we go on to look for the new eigenstates, we should make sure that we still have the right rigged Hilbert space they can live on. This is indeed the case, because the definition \eqref{Phidef} of $ \Phi $ is clearly invariant with respect to a linear transformation like \eqref{wsympl}. We can therefore use the same rigged Hilbert spaces $ \Phi \subset \mathcal{F} \subset \Phi^{\times} $ and $ \Phi \subset \mathcal{F} \subset \Phi^{\overline{\times}} $ in dealing with the new observables $ \hat{w}_{I} $.

\subsection{The Transformation of Coordinates}

Let us define new eigenstates $ \vert \omega \rangle_{\dot{\mathtt{c}}} \in \Phi^{\times} $ and eigenvalues $ \omega_{A} \in \mathbb{R} $ via
\begin{equation}\label{omegadef}
\hat{w}_{A} \vert \omega \rangle_{\dot{\mathtt{c}}} = \omega_{A} \vert \omega \rangle_{\dot{\mathtt{c}}}
\end{equation}
with a new set of observables
\begin{equation}\label{key}
\hat{w}_{A} = W_{AJ} ~ \hat{y}_{J}
\end{equation}
In the general case which interests us, the matrix $ W_{AJ} $ will mix coordinates and momenta, and $ \vert \omega \rangle_{\dot{\mathtt{c}}} $ will turn out to be different from the eigenstates $ \vert \alpha \rangle_{\mathtt{c}} $ of $ \hat{y}_{A} $.

Our main aim is to find $ \vert \omega \rangle_{\dot{\mathtt{c}}} $ in the coordinate eigenstate basis
\begin{equation}\label{key}
\vert \omega \rangle_{\dot{\mathtt{c}}} = \sum_{\alpha} \vert \alpha \rangle_{\mathtt{c}} ~ _{\mathtt{c}} \langle \alpha \vert \omega \rangle_{\dot{\mathtt{c}}}
\end{equation}
In the following, we elaborate on that. First let us remind that according to the definition \eqref{yder}, it holds
\begin{equation}\label{momonbasis}
_{\mathtt{c}} \langle \alpha \vert \hat{y}_{A+q} \vert \omega \rangle_{\dot{\mathtt{c}}} = - i \frac{\partial}{\partial \alpha_{A}} ~ _{\mathtt{c}} \langle \alpha \vert \omega \rangle_{\dot{\mathtt{c}}}
\end{equation}
and the defining relation \eqref{omegadef} for $ \vert \omega \rangle_{\dot{\mathtt{c}}} $ can be written as
\begin{equation}\label{eigenvalinbasis}
~ _{\mathtt{c}} \langle \alpha \vert \hat{w}_{A} \vert \omega \rangle_{\dot{\mathtt{c}}} = \omega_{A} ~ _{\mathtt{c}} \langle \alpha \vert \omega \rangle_{\dot{\mathtt{c}}}
\end{equation}
It follows from \eqref{momonbasis} and \eqref{eigenvalinbasis} that
\begin{equation}\label{key}
E_{AB} \alpha_{B} ~ _{\mathtt{c}} \langle \alpha \vert \omega \rangle_{\dot{\mathtt{c}}} - F_{AB} ~ i \frac{\partial}{\partial \alpha_{B}} ~ _{\mathtt{c}} \langle \alpha \vert \omega \rangle_{\dot{\mathtt{c}}} = \omega_{A} ~ _{\mathtt{c}} \langle \alpha \vert \omega \rangle_{\dot{\mathtt{c}}}
\end{equation}
In case that $ _{\mathtt{c}} \langle \alpha \vert \omega \rangle_{\dot{\mathtt{c}}} \neq 0 $, we divide by it and get (in matrix notation)
\begin{equation}\label{EFalom}
E \alpha + F a = \omega
\end{equation}
with
\begin{equation}\label{an}
a_{B} = - i \frac{\partial}{\partial \alpha_{B}} \ln ~ _{\mathtt{c}} \langle \alpha \vert \omega \rangle_{\dot{\mathtt{c}}}
\end{equation}
In the following we analyze the equation \eqref{EFalom} in an attempt to find a solution for $ ~ _{\mathtt{c}} \langle \alpha \vert \omega \rangle_{\dot{\mathtt{c}}} $. There may be couples $ (\alpha, \omega) $ for which no solution exists, then it must hold $ ~ _{\mathtt{c}} \langle \alpha \vert \omega \rangle_{\dot{\mathtt{c}}} = 0 $. Of course, the product may be zero even if there is a nonzero solution for it; one should be therefore careful about one's conclusions.

The first step in analyzing \eqref{EFalom} is to realize that it can be split into two fundamentally different parts. Since $ F $ is a general $ q \times q $ real matrix, it is not necessarily invertible. Upon employing the singular value decomposition
\begin{equation}\label{key}
F = U \Sigma V^{T} = U_{1} \Sigma_{r} V_{1}^{T}
\end{equation}
we can split \eqref{EFalom} into two equations
\begin{equation}\label{EFalomsplit}
\begin{aligned}
U_{1}^{T} E \alpha + \Sigma_{r} V_{1}^{T} a &= U_{1}^{T} \omega\\
U_{2}^{T} E \alpha &= U_{2}^{T} \omega
\end{aligned}
\end{equation}
They are obtained by multiplying \eqref{EFalom} from left by $ U_{1}^{T} $ and $ U_{2}^{T} $, respectively, and using $ U_{1}^{T} U_{1} = \mathbf{1} $ and $ U_{2}^{T} U_{1} = 0 $. The first row represents a linear set of $ r = \text{rank}(F) $ equations for $ a $, the second is an additional condition of dimension $ s = q - r $.

We first look closer at the first row of \eqref{EFalomsplit}. In general, it does not fix $ a $ uniquely, since it only contains $ r $ equations for a vector of dimension $ q $. Nevertheless, it can be used to fix at least a part of $ a $. Upon multiplying by $ V_{1} \Sigma_{r}^{-1} $, we get
\begin{equation}\label{key}
V_{1} V_{1}^{T} a = V_{1} \Sigma_{r}^{-1} U_{1}^{T} \omega - V_{1} \Sigma_{r}^{-1} U_{1}^{T} E \alpha
\end{equation}
where we recognize the pseudoinverse $ F^{+} = V_{1} \Sigma_{r}^{-1} U_{1}^{T} $ from Definition~\ref{MP} and rewrite
\begin{equation}\label{key}
V_{1} V_{1}^{T} a = F^{+}( \omega - E \alpha )
\end{equation}
The vector $ a $ is projected by $ P_{\mathcal{R}(F^{T})} = F^{+} F = V_{1} V_{1}^{T} $ to the row space $ \mathcal{R}(F^{T}) $ of $ F $. It follows that
\begin{equation}\label{key}
a = F^{+}( \omega - E \alpha ) + V_{2} \kappa
\end{equation}
with $ \kappa \in \mathbb{R}^{s} $ free. Now that we have expressed $ a $, we may use \eqref{an} to gain information about $ _{\mathtt{c}} \langle \alpha \vert \omega \rangle_{\dot{\mathtt{c}}} $. The starting point is
\begin{equation}\label{alphapde}
 ( F^{+}\omega + V_{2} \kappa ) - F^{+} E \alpha = - i \frac{\partial}{\partial \alpha} \ln ~ _{\mathtt{c}} \langle \alpha \vert \omega \rangle_{\dot{\mathtt{c}}}
\end{equation}

We would like to integrate along the following rule: given that
\begin{equation}\label{key}
\frac{\partial}{\partial \zeta} f(\zeta) = A \zeta + b
\end{equation}
with $ A $ symmetric, one finds the primitive function to be of the form
\begin{equation}\label{key}
f(\zeta) = \frac{1}{2} \zeta^{T} M \zeta + \zeta^{T} b + C
\end{equation}
with $ M $ satisfying $ A = \frac{1}{2} ( M + M^{T} ) $. Without loss of generality, one can choose $ A = M $.

The problem with \eqref{alphapde} is that $ F^{+} E $ is generally not symmetric. That is why we first need to prepare grounds for the integration. We start by splitting
\begin{equation}\label{alphabetagamma}
\alpha = \beta + \gamma
\end{equation}
with $ \beta \in \mathcal{R}(F^{T}) $ and $ \gamma \in \mathcal{N}(F) $. Since the two subspaces are orthogonal, the splitting is unique; the respective parts are $ \beta = F^{+} F \alpha = V_{1} V_{1}^{T} \alpha $ and $ \gamma = (\mathbf{1} - F^{+} F ) \alpha = V_{2} V_{2}^{T} \alpha $. It turns out one can comfortably integrate over $ \beta $. One prepares
\begin{equation}\label{key}
- i \frac{\partial}{\partial \alpha} \ln ~ _{\mathtt{c}} \langle \alpha \vert \omega \rangle_{\dot{\mathtt{c}}} = - i \frac{\partial}{\partial \beta} \ln ~ _{\mathtt{c}} \langle \beta + \gamma \vert \omega \rangle_{\dot{\mathtt{c}}}
\end{equation}
It follows from \eqref{alphapde}
\begin{equation}\label{key}
- i \frac{\partial}{\partial \beta} \ln ~ _{\mathtt{c}} \langle \beta + \gamma \vert \omega \rangle_{\dot{\mathtt{c}}} = ( F^{+}\omega + V_{2} \kappa - F^{+} E \gamma ) - F^{+} E \beta
\end{equation}
One can substitute $ \beta = F^{+} F \beta $. Then, using the symmetry of $ E F^{T} $ from Theorem \ref{sympl} and the symmetry of $ F^{+} F = ( F^{+} F )^{T} = F^{T} F^{+T} $ given by Theorem \ref{Penrose}, one finds that $ F^{+} E F^{+} F = F^{+} E F^{T} F^{+T} $ is symmetric. Thus one obtains
\begin{equation}\label{key}
- i \frac{\partial}{\partial \beta} \ln ~ _{\mathtt{c}} \langle \beta + \gamma \vert \omega \rangle_{\dot{\mathtt{c}}} = ( F^{+}\omega + V_{2} \kappa - F^{+} E \gamma ) - F^{+} E F^{T} F^{+T} \beta
\end{equation}
which can be integrated as
\begin{equation}\label{ilnbeta}
- i \ln ~ _{\mathtt{c}} \langle \beta + \gamma \vert \omega \rangle_{\dot{\mathtt{c}}} = G(\gamma) + \beta^{T} ( F^{+}\omega + V_{2} \kappa - F^{+} E \gamma ) - \frac{1}{2} \beta^{T} F^{+} E F^{T} F^{+T} \beta
\end{equation}

To get the full expression for $ i \ln ~ _{\mathtt{c}} \langle \beta + \gamma \vert \omega \rangle_{\dot{\mathtt{c}}} $, we also need to look at the derivative w.r.t. $ \gamma $. It is analogical:
\begin{equation}\label{gammapdf}
- i \frac{\partial}{\partial \gamma} \ln ~ _{\mathtt{c}} \langle \beta + \gamma \vert \omega \rangle_{\dot{\mathtt{c}}} = ( F^{+}\omega + V_{2} \kappa - F^{+} E \beta ) - F^{+} E \gamma
\end{equation}
Interestingly, one can again get the matrix $ F^{+} E $ into a symmetric form, only with a different trick. By definition $ F \gamma = 0 $, and it follows that $ F^{+T} \gamma = 0 $. Then one can simply symmetrize $ F^{+} E \gamma = (F^{+} E + E^{T} F^{+T} ) \gamma \equiv M \gamma $. But there is a problem: the quadratic term which would result from an integration of \eqref{gammapdf} turns out to be $ \gamma^{T} M \gamma = 0 $. This cannot be the primitive function, because it could only give rise to a zero derivative. That is, a contradiction appears unless
\begin{equation}\label{gammaconstr}
F^{+} E \gamma = 0
\end{equation}
We conclude that for \eqref{alphapde} to be true, $ \gamma $ must satisfy this condition. Then
\begin{equation}\label{ilngamma}
\begin{aligned}
- i \ln ~ _{\mathtt{c}} \langle \beta + \gamma \vert \omega \rangle_{\dot{\mathtt{c}}} &= B(\beta) + \gamma^{T} ( F^{+}\omega + V_{2} \kappa - F^{+} E \beta ) = \\
&= B(\beta) + \gamma^{T} ( F^{+}\omega + V_{2} \kappa )
\end{aligned}
\end{equation}
The simplification in the second row follows from $ \gamma^{T} F^{+} = 0 $. We could also discard the first term in the bracket, but let us keep it. The next step is to match the prescriptions \eqref{ilnbeta} and \eqref{ilngamma}. At first sight, we find a discrepancy because the term $ - \beta^{T} F^{+} E \gamma $ in \eqref{ilnbeta} cannot occur in \eqref{ilngamma}. However, the term is conveniently annihilated by \eqref{gammaconstr}. The result of the matching is
\begin{equation}\label{ilnbetagamma}
- i \ln ~ _{\mathtt{c}} \langle \beta + \gamma \vert \omega \rangle_{\dot{\mathtt{c}}} = C + (\beta + \gamma )^{T} ( F^{+}\omega + V_{2} \kappa ) - \frac{1}{2} \beta^{T} F^{+} E F^{T} F^{+T} \beta
\end{equation}
where $ C $ is a complex constant. We remark that $ C $ may still depend on the parameters of the problem, like $ \omega $, $ E $ and $ F $. Next, we can return to the formulation with $ \alpha $ by writing $ \beta^{T} F^{+} E F^{T} F^{+T} \beta = \alpha^{T} F^{+} E F^{T} F^{+T} \alpha $, and get
\begin{equation}\label{ilnalpha}
- i \ln ~ _{\mathtt{c}} \langle \alpha \vert \omega \rangle_{\dot{\mathtt{c}}} = C + \alpha^{T} ( F^{+}\omega + V_{2} \kappa ) - \frac{1}{2} \alpha^{T} F^{+} E F^{T} F^{+T} \alpha
\end{equation}
Finally, we summarize our findings as
\begin{equation}\label{ao}
~ _{\mathtt{c}} \langle \alpha \vert \omega \rangle_{\dot{\mathtt{c}}} = e^{i C} e^{i \left( - \frac{1}{2} \alpha^{T} F^{+} E F^{T} F^{+T} \alpha + \alpha^{T} ( F^{+}\omega + V_{2} \kappa ) \right) }
\end{equation}
\begin{equation}\label{gammaconstr2}
F^{+} E V_{2} V_{2}^{T} \alpha = 0
\end{equation}
The latter is an equivalent form of the condition \eqref{gammaconstr}.

Let us continue with an analysis of the additional conditions. So far, we have obtained two of them: besides \eqref{gammaconstr2}, we also have the original condition given in the second row of \eqref{EFalomsplit}, which is equivalently written as
\begin{equation}\label{constr}
U_{2} U_{2}^{T} E \alpha = U_{2} U_{2}^{T} \omega
\end{equation}
We will show how these two conditions limit the possible values of the involved variables and what are the consequences. Let us start by making the following observation.

\begin{observation} \label{bg}
	Provided that $ \alpha = \beta + \gamma $ with $ \beta \in \mathcal{R}(F^{T}) $ and $ \gamma \in \mathcal{N}(F) $, the couple \eqref{gammaconstr2} and \eqref{constr} is equivalent to
	\begin{equation}\label{Engamma2}
	E \gamma = U_{2} U_{2}^{T} \omega
	\end{equation}
\end{observation}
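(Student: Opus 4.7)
The plan is to rewrite the two conditions using the fact that $V_{2}V_{2}^{T}$ is the projector onto $\mathcal{N}(F)$, and then to reduce the claimed equivalence to a single identity about how $E$ acts on the two complementary pieces $\beta$ and $\gamma$. Since $\beta \in \mathcal{R}(F^{T})$ and $\gamma \in \mathcal{N}(F)$, one has $V_{1}V_{1}^{T}\alpha = \beta$ and $V_{2}V_{2}^{T}\alpha = \gamma$, so \eqref{gammaconstr2} is literally $F^{+}E\gamma = 0$. Moreover $F^{+} = V_{1}\Sigma_{r}^{-1}U_{1}^{T}$ kills $\mathcal{R}(U_{2}) = \mathcal{N}(F^{T})$ because $U_{1}^{T}U_{2} = 0$, so the condition $F^{+}E\gamma = 0$ is equivalent to $E\gamma \in \mathcal{N}(F^{T})$, i.e. $E\gamma = U_{2}U_{2}^{T}E\gamma$.

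The pivotal step, and the one I expect to be the main obstacle, is the identity $U_{2}U_{2}^{T}E\beta = 0$ for every $\beta \in \mathcal{R}(F^{T})$. I would establish it by writing $\beta = F^{T}F^{+T}\beta$, which is valid because $F^{T}F^{+T} = (F^{+}F)^{T} = V_{1}V_{1}^{T}$ is the identity on $\mathcal{R}(F^{T})$. Then
\begin{equation*}
E\beta = EF^{T}F^{+T}\beta = FE^{T}F^{+T}\beta,
\end{equation*}
where I have used the symmetry $EF^{T} = FE^{T}$ supplied by Theorem \ref{sympl} (condition 3). Applying $U_{2}U_{2}^{T}$ annihilates this because $U_{2}^{T}U_{1} = 0$ gives $U_{2}U_{2}^{T}F = U_{2}U_{2}^{T}U_{1}\Sigma_{r}V_{1}^{T} = 0$. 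This is where the symplecticity of $W$ enters in an essential way.

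With these ingredients, both directions of the equivalence follow immediately. For $(\Leftarrow)$, if $E\gamma = U_{2}U_{2}^{T}\omega$, then $F^{+}E\gamma = F^{+}U_{2}U_{2}^{T}\omega = 0$ because $F^{+}U_{2} = 0$, yielding \eqref{gammaconstr2}; and
\begin{equation*}
U_{2}U_{2}^{T}E\alpha = U_{2}U_{2}^{T}E\beta + U_{2}U_{2}^{T}E\gamma = 0 + U_{2}U_{2}^{T}U_{2}U_{2}^{T}\omega = U_{2}U_{2}^{T}\omega,
\end{equation*}
yielding \eqref{constr}. For $(\Rightarrow)$, condition \eqref{gammaconstr2} places $E\gamma$ in $\mathcal{N}(F^{T})$, so $E\gamma = U_{2}U_{2}^{T}E\gamma$; the key identity together with \eqref{constr} gives $U_{2}U_{2}^{T}E\gamma = U_{2}U_{2}^{T}E\alpha = U_{2}U_{2}^{T}\omega$, so finally $E\gamma = U_{2}U_{2}^{T}\omega$, as required.
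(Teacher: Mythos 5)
Your proof is correct and rests on the same key fact as the paper's: that $E\beta = EF^{T}F^{+T}\beta = FE^{T}F^{+T}\beta \in \mathcal{R}(F)$ for $\beta \in \mathcal{R}(F^{T})$, established via the symmetry of $EF^{T}$ from Theorem \ref{sympl}. The paper packages the conclusion as a matching of two orthogonal decompositions of $E\alpha$ into $\mathcal{R}(F)\oplus\mathcal{N}(F^{T})$, whereas you apply the projector $U_{2}U_{2}^{T}$ directly; this is the same argument in slightly different notation.
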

\begin{proof}
	The splitting---note that it was used before in \eqref{alphabetagamma}---allows us to study
	\begin{equation}\label{Enalphan}
	E \alpha = E \beta + E \gamma
	\end{equation}
	Upon arranging
	\begin{equation}\label{key}
	E \beta = E F^{+} F \beta = E F^{T} F^{+T} \beta = F E^{T} F^{+T} \beta
	\end{equation}
	it is found that $ E \beta \in  \mathcal{R}(E) \cap \mathcal{R}(F) $.
	
	Assume that \eqref{gammaconstr2} and \eqref{constr} hold. Then $ E \gamma \in \mathcal{R}(E) \cap \mathcal{N}(F^{T}) $, as implied by \eqref{gammaconstr2}. Compare \eqref{Enalphan} to
	\begin{equation}\label{key}
	E \alpha = F F^{+} E \alpha + U_{2} U_{2}^{T}  E \alpha
	\end{equation}
	where $ E \alpha $ is subjected to a standard splitting into two mutually orthogonal parts $ F F^{+} E \alpha \in \mathcal{R}(F) $ and $ U_{2} U_{2}^{T}  E \alpha \in \mathcal{N}(F^{T}) $. It follows
	\begin{equation}\label{Enbetan}
	E \beta = F F^{+} E \alpha
	\end{equation}
	\begin{equation}\label{Engamma}
	E \gamma = U_{2} U_{2}^{T}  E \alpha
	\end{equation}
	We note that using \eqref{Enalphan} and \eqref{gammaconstr2}, one can also rewrite \eqref{Enbetan} as $ E \beta = F F^{+} E \beta $. This equation is not needed for the proof anyway, we are giving it only for completeness. By plugging \eqref{constr} into \eqref{Engamma}, one obtains \eqref{Engamma2}.
	
	The other direction can be proven analogously. Assume that \eqref{Engamma2} holds. Since $ E $ maps onto $ \mathcal{R}(E) $ and $ U_{2} U_{2}^{T} $ is a projector onto $ \mathcal{N}(F^{T}) $, a direct consequence is $ E \gamma \in \mathcal{R}(E) \cap \mathcal{N}(F^{T}) $. The equations \eqref{Enbetan} and \eqref{Engamma} follow, and it is a matter of arrangement to obtain both \eqref{gammaconstr2} and \eqref{constr}.
\end{proof}

The form of the conditions established in Observation \ref{bg} is much simpler and can be used to solve for $ \alpha $. Interestingly, it turns out that $ \beta $ is not constrained by these at all. Therefore we only need to solve for $ \gamma $. First of all, looking at \eqref{Engamma2}, we see that it requires $ \omega $ to satisfy
\begin{equation}\label{omegaconstr}
U_{2} U_{2}^{T} \omega \in \mathcal{R}(E)
\end{equation}
This result seems unexpected, because it constraints the parameters $ \omega $ of the problem. If $ \omega $ violated \eqref{omegaconstr}, we would have to conclude that $ _{\mathtt{c}} \langle \alpha \vert \omega \rangle_{\dot{\mathtt{c}}} = 0 $ for all $ \alpha $, i.e., $ \vert \omega \rangle_{\dot{\mathtt{c}}} = 0 $, which would be quite strange. However, it turns out that this case does not occur:

\begin{observation}\label{NETNFT}
	The condition \eqref{omegaconstr} is always satisfied.
\end{observation}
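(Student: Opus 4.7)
My plan is to reduce the claim to a purely structural statement about $E$ and $F$. Since $U_2 U_2^T$ is the orthogonal projector onto $\mathcal{N}(F^T)$ and $\omega \in \mathbb{R}^q$ is arbitrary, the assertion \eqref{omegaconstr} is equivalent to the subspace inclusion $\mathcal{N}(F^T) \subseteq \mathcal{R}(E)$. I would try to derive this directly from the symplectic structure of $W$, using in particular the identity $EH^T - FG^T = \mathbf{1}$ provided by condition 3 of Theorem \ref{sympl}, which is the only place in the problem where the assumption that $W$ is symplectic can enter.

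First I would exploit the singular value decomposition $F = U_1 \Sigma_r V_1^T$. This makes manifest that $\mathcal{R}(F) = \mathcal{R}(U_1)$ and hence $U_2^T F = 0$, since the columns of $U_2$ are orthonormal and orthogonal to those of $U_1$. Multiplying the symplectic identity on the left by $U_2 U_2^T$ therefore annihilates the $FG^T$ contribution and collapses it to
\[
U_2 U_2^T E H^T = U_2 U_2^T.
\]
Applied to $\omega$, this rewrites the projected eigenvalue as $U_2 U_2^T \omega = U_2 U_2^T (E H^T \omega)$, expressing it as the $\mathcal{N}(F^T)$-projection of a vector that manifestly lies in $\mathcal{R}(E)$.

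The closing step, which I expect to be the main obstacle, is to argue that this projection keeps us inside $\mathcal{R}(E)$: one has to show that removing the $\mathcal{R}(F)$-component of $E H^T \omega$ does not take us out of $\mathcal{R}(E)$, i.e.\ that the piece subtracted away lies in $\mathcal{R}(E) \cap \mathcal{R}(F)$. My approach here would be to decompose $E H^T \omega = U_1 U_1^T (E H^T \omega) + U_2 U_2^T (E H^T \omega)$ and then invoke the symmetry $EF^T = FE^T$ from Theorem \ref{sympl} together with the pseudoinverse identities of Theorem \ref{Penrose} to rewrite the $U_1 U_1^T$-piece as a vector of the form $E \cdot (\text{something})$. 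Only by combining this symmetry with the full symplectic identity can one hope to close the loop and conclude $U_2 U_2^T \omega \in \mathcal{R}(E)$ for every $\omega \in \mathbb{R}^q$.
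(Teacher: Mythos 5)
Your opening moves are sound: since $U_2 U_2^{T}$ is the orthogonal projector onto $\mathcal{N}(F^{T})$ and $\omega$ ranges over all of $\mathbb{R}^{q}$, the Observation is indeed equivalent to the inclusion $\mathcal{N}(F^{T}) \subseteq \mathcal{R}(E)$, and your identity $U_2 U_2^{T} E H^{T} = U_2 U_2^{T}$ follows correctly from $U_2^{T} F = 0$ and condition 3 of Theorem \ref{sympl}. But the step you flag as ``the main obstacle'' is not merely hard --- it cannot be closed, because the inclusion is false for general symplectic $W$. Note that $\mathcal{N}(F^{T}) \subseteq \mathcal{R}(E) = \mathcal{N}(E^{T})^{\perp}$ asserts that $\mathcal{N}(F^{T})$ and $\mathcal{N}(E^{T})$ are \emph{orthogonal}, whereas what symplecticity actually delivers (via $H E^{T} - G F^{T} = \mathbf{1}$) is only the much weaker fact $\mathcal{N}(E^{T}) \cap \mathcal{N}(F^{T}) = \{0\}$; two subspaces can intersect trivially without being orthogonal. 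A concrete instance: take $q = 2$ and
\[
E = \begin{pmatrix} 0 & 1 \\ 0 & 1 \end{pmatrix}, \qquad
F = \begin{pmatrix} 1 & 0 \\ 0 & 0 \end{pmatrix}, \qquad
G = \begin{pmatrix} -1 & 1 \\ \phantom{-}1 & 0 \end{pmatrix}, \qquad
H = \mathbf{1}.
\]
One checks that $E^{T}G$ and $F^{T}H$ are symmetric and $E^{T}H - G^{T}F = \mathbf{1}$, so $W$ is symplectic. Here $\mathcal{N}(F^{T}) = \mathrm{span}\{(0,1)^{T}\}$ while $\mathcal{R}(E) = \mathrm{span}\{(1,1)^{T}\}$, so $U_2 U_2^{T}\omega \notin \mathcal{R}(E)$ whenever $\omega_{2} \neq 0$. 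In your notation, the piece you subtract, $U_1 U_1^{T} E H^{T} \omega = (\omega_{2}, 0)^{T}$, lies in $\mathcal{R}(F)$ but not in $\mathcal{R}(E) \cap \mathcal{R}(F)$ (which is $\{0\}$ here), so removing it does take you out of $\mathcal{R}(E)$.

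You should also know that the paper's own proof founders on precisely this point: it proves $\mathcal{N}(E^{T}) \cap \mathcal{N}(F^{T}) = \{0\}$ and then infers that a vector of $\mathcal{N}(F^{T})$, not being in $\mathcal{N}(E^{T})$, must lie in $\mathcal{R}(E)$ --- but ``not in $\mathcal{N}(E^{T})$'' does not imply ``in $\mathcal{N}(E^{T})^{\perp}$.'' So your instinct to isolate the closing step as the crux was exactly right; the honest conclusion is that the only thing one can legitimately extract from the symplectic identities is the trivial-intersection property, not the membership condition \eqref{omegaconstr}, and any completion of your argument (or the paper's) would have to fail on the example above.
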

\begin{proof}
	The observation follows from the regularity of $ W $. We can prove it easily from the transpose of the identity in condition 3. of Theorem \ref{sympl}, i.e., $ H E^{T} - G F^{T} = \mathbf{1} $. Assuming $ \iota \in \mathcal{N}(E^{T}) \cap \mathcal{N}(F^{T}) $, we get $ 0 = H E^{T}\iota  - G F^{T}\iota = \iota $, and therefore $ \mathcal{N}(E^{T}) \cap \mathcal{N}(F^{T}) = \{ 0 \} $. Since by definition $ U_{2} U_{2}^{T} \omega \in \mathcal{N}(F^{T}) $, we know that it cannot be in $ \mathcal{N}(E^{T}) $ unless it is 0. In any case, it follows that $ U_{2} U_{2}^{T} \omega \in \mathcal{R}(E) $.
\end{proof}

We continue with the following observation.
\begin{observation}
	The condition \eqref{Engamma2} with $ \gamma \in \mathcal{N}(F) $ is equivalent to the prescription
	\begin{equation}\label{gammafinal}
	\gamma = E^{+} U_{2} U_{2}^{T} \omega + \mu
	\end{equation}
	with $ \mu \in \mathcal{N}(E) \cap \mathcal{N}(F) $.
\end{observation}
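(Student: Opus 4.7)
The strategy is to verify set equality between $S := \{\gamma\in\mathcal{N}(F) : E\gamma = U_2 U_2^T \omega\}$ and $P := \{E^+ U_2 U_2^T \omega + \mu : \mu\in\mathcal{N}(E)\cap\mathcal{N}(F)\}$. The key structural observation is that both sets are cosets of $\mathcal{N}(E)\cap\mathcal{N}(F)$: the difference of any two elements of $S$ is annihilated by both $E$ and $F$, while $P$ is a coset by construction. Hence $S = P$ reduces to exhibiting a single common element, and the natural candidate is $\gamma_0 := E^+ U_2 U_2^T \omega$.

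To see $\gamma_0\in S$, two facts must be checked. First, $E\gamma_0 = EE^+\, U_2 U_2^T \omega = P_{\mathcal{R}(E)}\, U_2 U_2^T \omega = U_2 U_2^T \omega$, where the last equality invokes Observation~\ref{NETNFT} to assert $U_2 U_2^T \omega\in\mathcal{R}(E)$. Second, $F\gamma_0 = 0$, that is, $FE^+ U_2 U_2^T \omega = 0$. Granted both, the equivalence follows cleanly in both directions: for any $\gamma\in S$, the vector $\mu := \gamma-\gamma_0$ satisfies $E\mu = U_2 U_2^T \omega - U_2 U_2^T \omega = 0$ and $F\mu = 0 - 0 = 0$, so $\mu\in\mathcal{N}(E)\cap\mathcal{N}(F)$; the converse is the same computation read backwards.

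The main obstacle is therefore establishing $FE^+ U_2 U_2^T \omega = 0$. Setting $v := U_2 U_2^T \omega$, which lies in $\mathcal{N}(F^T)\cap\mathcal{R}(E)$ by the definition of $U_2$ and by Observation~\ref{NETNFT}, the starting ingredient is the symplectic identity $EF^T = FE^T$ from Theorem~\ref{sympl}, part 3: this immediately gives $FE^T v = EF^T v = 0$, so $E^T v\in\mathcal{N}(F)\cap\mathcal{R}(E^T)$. Writing $E^+ = (E^T E)^+ E^T$, the remaining task becomes to show that $(E^T E)^+$ preserves $\mathcal{N}(F)\cap\mathcal{R}(E^T)$, whence $FE^+ v = F(E^T E)^+ E^T v = 0$. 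I expect this subsidiary step to be the delicate point, requiring the second symplectic identity $EH^T - FG^T = \mathbf{1}$: its consequence $HE^T v = v$ on $\mathcal{N}(F^T)$ (as in the proof of Observation~\ref{NETNFT}) supplies the intertwining that bridges the gap between $E^T$ (directly compatible with condition (d)) and the pseudoinverse $E^+$ actually needed.
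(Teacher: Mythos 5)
Your reduction is sound as far as it goes, and it is essentially the paper's own route: both arguments write the general solution of \eqref{Engamma2} as the particular solution $\gamma_{0}=E^{+}U_{2}U_{2}^{T}\omega$ plus an element of $\mathcal{N}(E)$ (invoking Observation~\ref{NETNFT} for consistency), and both then reduce the whole equivalence to the single claim $F\gamma_{0}=0$. But that single claim is precisely what you do not prove: you call it ``the main obstacle,'' reduce it to showing that $(E^{T}E)^{+}$ preserves $\mathcal{N}(F)\cap\mathcal{R}(E^{T})$, and stop at ``I expect this subsidiary step to be the delicate point.'' Since the entire content of the observation lives in that step, your proposal is a plan rather than a proof. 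For comparison, the paper closes the gap by the chain $F^{+}FE^{+}U_{2}U_{2}^{T}\omega=F^{T}F^{+T}E^{+}U_{2}U_{2}^{T}\omega=F^{T}(EF^{T})^{+}U_{2}U_{2}^{T}\omega=F^{T}(FE^{T})^{+}U_{2}U_{2}^{T}\omega=F^{T}E^{+T}F^{+}U_{2}U_{2}^{T}\omega=0$, using the symmetry of $EF^{T}$ and $F^{+}U_{2}=0$.

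You were right to flag this step as delicate, because it is where the difficulty genuinely sits: the paper's chain invokes the reverse-order law $(AB)^{+}=B^{+}A^{+}$ twice, and that identity does not hold for general matrices. In fact the claim $FE^{+}U_{2}U_{2}^{T}\omega=0$ can fail for a genuine symplectic $W$: take $q=2$, $E=\bigl(\begin{smallmatrix}1&1\\0&1\end{smallmatrix}\bigr)$, $F=\bigl(\begin{smallmatrix}1&0\\0&0\end{smallmatrix}\bigr)$, $G=\bigl(\begin{smallmatrix}0&1\\1&0\end{smallmatrix}\bigr)$, $H=\mathbf{1}$, for which one checks $W^{T}\sigma W=\sigma$ directly. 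Here $U_{2}U_{2}^{T}=\mathrm{diag}(0,1)$, $E^{+}U_{2}U_{2}^{T}\omega=(-\omega_{2},\omega_{2})^{T}$, and $FE^{+}U_{2}U_{2}^{T}\omega=(-\omega_{2},0)^{T}\neq 0$; correspondingly, $E\gamma=U_{2}U_{2}^{T}\omega$ admits no solution with $\gamma\in\mathcal{N}(F)$ when $\omega_{2}\neq 0$, whereas \eqref{gammafinal} always produces one, so the two sides of the asserted equivalence genuinely disagree. Your coset argument therefore cannot be completed as stated --- the candidate common element simply fails to lie in $S$ --- and the honest conclusion is that the missing step is not merely unproven but false in general; only the $\mathcal{N}(F)$-component $V_{2}V_{2}^{T}E^{+}U_{2}U_{2}^{T}\omega$ of the offset carries the correct information, which is why the downstream formula \eqref{omegaresultfin} can still survive (the $\mathcal{R}(F^{T})$-part of the offset is absorbed into the $\beta$-integration and the constant $C$).
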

\begin{proof}
	The equation \eqref{Engamma2} can be solved straightforwardly as $ \gamma = E^{+} U_{2} U_{2}^{T} \omega + \mu $ with $ \mu \in \mathcal{N}(E) $. We only need to make sure that $ \gamma \in \mathcal{N}(F) $, or in other words, $ F^{+} F E^{+} U_{2} U_{2}^{T} \omega + F^{+} F \mu = 0 $. We arrange
	\begin{equation}\label{key}
	\begin{aligned}
	F^{+} F E^{+} U_{2} U_{2}^{T} \omega &= F^{T} F^{+T} E^{+} U_{2} U_{2}^{T} \omega = \\ &= F^{T} \left( E F^{T} \right)^{+} U_{2} U_{2}^{T} \omega = \\ &= F^{T} \left( F E^{T} \right)^{+} U_{2} U_{2}^{T} \omega = \\ &= F^{T} E^{+T} F^{+} U_{2} U_{2}^{T} \omega
	\end{aligned}
	\end{equation}
	and since $ F^{+} U_{2} = 0 $, we see that this term vanishes; that is, $ E^{+} U_{2} U_{2}^{T} \omega \in \mathcal{N}(F) \cap \mathcal{R}(E^{T}) $. The condition $ \gamma \in \mathcal{N}(F) $ therefore turns into $ F^{+} F \mu = 0 $, i.e., $ \mu \in \mathcal{N}(F) $. The other direction of the implication can be proven analogically.
\end{proof}

Further simplification is achieved with:

\begin{observation}\label{nzero}
	It holds $ \mathcal{N}(E) \cap \mathcal{N}(F) = \{ 0 \} $.
\end{observation}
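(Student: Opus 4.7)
The plan is to mirror the strategy used in Observation \ref{NETNFT}, where the analogous statement for the transposes, $\mathcal{N}(E^T) \cap \mathcal{N}(F^T) = \{0\}$, was deduced from the symplectic identity $HE^T - GF^T = \mathbf{1}$. Here I want to exploit the dual identity. Condition 2 of Theorem \ref{sympl} provides
\begin{equation*}
E^T H - G^T F = \mathbf{1},
\end{equation*}
which is tailored exactly to pin down vectors annihilated by $E$ and $F$.

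I would then pick an arbitrary $\nu \in \mathcal{N}(E) \cap \mathcal{N}(F)$ and apply the displayed identity to it. Because $F\nu = 0$, the $G^T F \nu$ term drops out, and what remains is $\nu = E^T H \nu$, which places $\nu$ in the row space $\mathcal{R}(E^T)$. Invoking the orthogonal decomposition $\mathbb{R}^{q} = \mathcal{R}(E^T) \oplus \mathcal{N}(E)$ from the theorem following Definition \ref{rowcolnull}, together with the hypothesis $\nu \in \mathcal{N}(E)$, forces $\nu = 0$.

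I do not foresee a genuine obstacle; the argument closes in essentially one line. The only mildly delicate point is selecting the correct symplectic identity from those offered by Theorem \ref{sympl}: relations involving $E^T$ or $F^T$ acting on $\nu$ are useless here, since the hypothesis concerns $E\nu$ and $F\nu$, and the transposed form $H^T E - F^T G = \mathbf{1}$ would yield an equally short mirror-image argument going through $\mathcal{R}(F^T) \perp \mathcal{N}(F)$. One could alternatively try to reduce the claim to Observation \ref{NETNFT} applied to some auxiliary symplectic matrix whose top blocks are $(E^T, F^T)$, but no such matrix arises naturally from $W$ while preserving symplecticity, so the direct route above is preferable.
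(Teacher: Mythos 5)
Your argument is correct and is essentially the same as the paper's: both apply the identity $E^{T}H - G^{T}F = \mathbf{1}$ to a vector in $\mathcal{N}(E)\cap\mathcal{N}(F)$, conclude it lies in $\mathcal{R}(E^{T})$, and use $\mathcal{R}(E^{T})\perp\mathcal{N}(E)$ to force it to vanish. The paper merely adds the redundant mirror-image step through $\mathcal{R}(F^{T})$ that you correctly note is also available.
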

\begin{proof}
	Suppose $ \mu \in \mathcal{N}(E) \cap \mathcal{N}(F) $, i.e., $ F \mu = E \mu = 0 $. Then, using the identities in condition 2. and 3. of Theorem \ref{sympl}, one gets $ E^{T} H \mu = \mu $ and $ - F^{T} G \mu = \mu $. It follows that $ \mu \in \mathcal{R}(E^{T}) $ as well as $ \mu \in \mathcal{R}(F^{T}) $. Since $ \mathcal{N}(E) \perp \mathcal{R}(E^{T}) $, $ \mathcal{N}(F) \perp \mathcal{R}(F^{T}) $, each one of these is enough to conclude that $ \mu = 0 $.
\end{proof}

We have now completed the calculations by giving all additional conditions a compact form: we found that there is no condition on $ \omega $, while $ \alpha $ is constrained by the simple prescription
\begin{equation}\label{gammaconstrfin}
\gamma = E^{+} U_{2} U_{2}^{T} \omega
\end{equation}
gained from \eqref{gammafinal} together with Observation \ref{nzero}. The equation \eqref{gammaconstrfin} needs to be incorporated into a final expression for $ \vert \omega \rangle_{\dot{\mathtt{c}}} $ now. Our educated guess is that it may be done with a Dirac delta function. In particular, we conjecture that $ _{\mathtt{c}} \langle \alpha \vert \omega \rangle_{\dot{\mathtt{c}}} $ should be proportional to
\begin{equation}\label{key}
\delta^{q} \left( \gamma - E^{+} U_{2} U_{2}^{T} \omega \right)
\end{equation}
Such term would then appear in
\begin{equation}\label{key}
\vert \omega \rangle_{\dot{\mathtt{c}}} = \sum_{\alpha} \vert \alpha \rangle_{\mathtt{c}} ~_{\mathtt{c}} \langle \alpha \vert \omega \rangle_{\dot{\mathtt{c}}}
\end{equation}
The present summation over $ \alpha $ can be split into two, one over $ \beta \in \mathcal{R}(F^{T}) $ and the other over $ \gamma \in \mathcal{N}(F) $,
\begin{equation}\label{key}
\vert \omega \rangle_{\dot{\mathtt{c}}} = \sum_{\beta} \sum_{\gamma} \vert \beta + \gamma \rangle_{\mathtt{c}} ~ _{\mathtt{c}}\langle \beta + \gamma \vert \omega \rangle_{\dot{\mathtt{c}}}
\end{equation}
The conditions $ \beta \in \mathcal{R}(F^{T}) $, $ \gamma \in \mathcal{N}(F) $ shall be implemented implicitly as
\begin{equation}\label{key}
\sum_{\beta} \equiv \sum_{\beta} \delta^{s}(V_{2}^{T}\beta), \qquad \sum_{\gamma} \equiv \sum_{\gamma} \delta^{r}(V_{1}^{T}\gamma)
\end{equation}
In other words, it should be understood that the delta functions are there although we will not write them. Their arguments enforce that the integration is taken over the subspaces stated: recall that it holds $ \beta = V_{1} V_{1}^{T} \beta + V_{2} V_{2}^{T} \beta $ with $ V_{1} V_{1}^{T} \beta \in \mathcal{R}(F^{T}) $ and $ V_{2} V_{2}^{T} \beta \in \mathcal{N}(F) $, so if the latter is zero, we get $ \beta \in \mathcal{R}(F^{T}) $. Similarly for $ \gamma \in \mathcal{N}(F) $. Altogether, plugging in \eqref{ao}, we expect a result of the form
\begin{equation}\label{key}
\begin{aligned}
\vert \omega \rangle_{\dot{\mathtt{c}}} &= \sum_{\beta} \sum_{\gamma} \vert \alpha \rangle_{\mathtt{c}} ~ e^{i C} e^{i \left( - \frac{1}{2} \alpha^{T} F^{+} E F^{T} F^{+T} \alpha + \alpha^{T} ( F^{+}\omega + V_{2} \kappa ) \right) } ~ \delta^{q} \left( \gamma - E^{+} U_{2} U_{2}^{T} \omega \right)
\end{aligned}
\end{equation}
where we use a shorthand $ \alpha \equiv \beta + \gamma $, or
\begin{equation}\label{omegaresult}
\vert \omega \rangle_{\dot{\mathtt{c}}} = \sum_{\beta} \vert \alpha \rangle_{\mathtt{c}} ~ e^{i C} e^{i \left( - \frac{1}{2} \alpha^{T} F^{+} E F^{T} F^{+T} \alpha + \alpha^{T} ( F^{+}\omega + V_{2} \kappa ) \right) }
\end{equation}
where we have updated to $ \alpha \equiv \beta + E^{+} U_{2} U_{2}^{T} \omega $, still with $ \beta \in \mathcal{R}(F^{T}) $. Let us recall that $ C $ is a complex constant w.r.t. $ \alpha $, which may nevertheless depend on $ \omega $ or $ W $. The real part of $ C $ is irrelevant, the imaginary part shall serve as normalization. The vector $ \kappa $ of dimension $ s = q - \text{rank} ( F ) $ is arbitrary.

We can simplify
\begin{equation}\label{omegaresult2}
\vert \omega \rangle_{\dot{\mathtt{c}}} = \sum_{\beta} \vert \beta + E^{+} U_{2} U_{2}^{T} \omega \rangle_{\mathtt{c}} ~ e^{i C} e^{i \left(- \frac{1}{2} \beta^{T} F^{+} E F^{T} F^{+T} \beta - \beta^{T} F^{+}\omega + \omega^{T} U_{2} U_{2}^{T} E^{+T} V_{2} \kappa \right) }
\end{equation}
and while modifying the real part of $ C $ (with only a slight abuse of notation), we can throw away the constant term in the exponent, getting
\begin{equation}\label{omegaresult3}
\vert \omega \rangle_{\dot{\mathtt{c}}} = \sum_{\beta} \vert \beta + E^{+} U_{2} U_{2}^{T} \omega \rangle_{\mathtt{c}} ~ e^{i C} e^{i \left(- \frac{1}{2} \beta^{T} F^{+} E F^{T} F^{+T} \beta + \beta^{T} F^{+}\omega \right) }
\end{equation}
We have therefore got rid of all the arbitrariness in the expression.

\vspace{\baselineskip}

Let us summarize the conclusions of this paragraph. Considering the eigenvalue problem $ \hat{w} \vert \omega \rangle_{\dot{\mathtt{c}}} = \omega \vert \omega \rangle_{\dot{\mathtt{c}}} $ with the set of observables $ \hat{w} = W  \hat{y} $ (given by a symplectic transformation with a matrix $ W $), we found that the eigenstate $ \vert \omega \rangle_{\dot{\mathtt{c}}} $ satisfies
\begin{equation}\label{omegaresultfin}
\begin{aligned}
\vert \omega \rangle_{\dot{\mathtt{c}}} &= \sum_{\beta} \vert \beta + E^{+} U_{2} U_{2}^{T} \omega \rangle_{\mathtt{c}} ~ e^{i C} e^{i \left( -\frac{1}{2} \beta^{T} F^{+} E F^{T} F^{+T} \beta + \beta^{T} F^{+}\omega \right) }
\\ & \text{ with } \beta \in \mathcal{R}(F^{T})
\end{aligned}
\end{equation}
Eventually, let us express the integration over $ \beta $ in a more explicit way. We lay down $ \beta = V \chi = V_{1} \xi + V_{2} \zeta $ and perform the substitution
\begin{equation}\label{key}
\sum_{\beta} \delta^{s}(V_{2}^{T}\beta) = \sum_{\chi} \lvert \det V \rvert ~ \delta^{s}(\zeta) = \lvert \det V \rvert \sum_{\xi} = \sum_{\xi}
\end{equation}
where $ \chi \in \mathbb{R}^{q} $ while $ \xi \in \mathbb{R}^{r} $ with $ r = \text{rank}(F) $. Note that since $ V $ is an orthogonal matrix, $ \lvert \det V \rvert = 1 $. We may then rewrite \eqref{omegaresultfin} as
\begin{equation}\label{omegaresultfin2}
\vert \omega \rangle_{\dot{\mathtt{c}}} = \sum_{\xi} \vert V_{1} \xi + E^{+} U_{2} U_{2}^{T} \omega \rangle_{\mathtt{c}} ~ e^{i C} e^{i \left( -\frac{1}{2} \xi^{T} V_{1}^{T} F^{+} E F^{T} F^{+T} V_{1} \xi + \xi^{T} V_{1}^{T} F^{+}\omega \right) }
\end{equation}

\subsection{Wavefunctions and the Resolution of Identity}
One expects that the wavefunctions $ _{\dot{\mathtt{c}}}\langle \varrho \vert \omega \rangle_{\dot{\mathtt{c}}} $ given by the eigenstates $ \vert \varrho \rangle_{\dot{\mathtt{c}}} , \vert \omega \rangle_{\dot{\mathtt{c}}} \in  \Phi^{\times} $ of $ \hat{w}_{A} $ will produce Dirac delta functions, as in the case of the original observables. We will check this now. First we prepare grounds by performing a linear substitution in the Dirac delta function. We have
\begin{equation}\label{key}
\delta^{q}(x) = (2 \pi)^{-q} \int_{\mathbb{R}} d^{q}p ~ e^{i p^{T} x}
\end{equation}
with matrix notation in the exponent. Then, for a $ q \times q $ regular matrix $ V $, we find
\begin{equation}\label{key}
\begin{aligned}
\delta^{q}(V \tau) &= (2 \pi)^{-q} \int_{\mathbb{R}} d^{q}p ~ e^{i p^{T} V \tau} = (2 \pi)^{-q} \lvert \det V^{T} \rvert^{-1} \int_{\mathbb{R}} d^{q}\nu ~ e^{i \nu^{T} \tau} = \lvert \det V^{T} \rvert^{-1} \delta^{q}(\tau)
\end{aligned}
\end{equation}
We shall use this substitution with an orthogonal matrix $ V $, for which one has simply
\begin{equation}\label{key}
\delta^{q}(V \tau) = \delta^{q}(\tau)
\end{equation}
Next, considering a vector
\begin{equation}\label{key}
\psi = \begin{pmatrix}
\theta \\
\phi
\end{pmatrix}
\end{equation}
with $ \theta \in \mathbb{R}^{r} $, $ \phi \in \mathbb{R}^{s} $, one can arrange
\begin{equation}\label{deltadecomp}
\begin{aligned}
\delta^{q}( V_{1} \theta + V_{2} \phi ) = \delta^{q}( V \psi ) = \delta^{q}( \psi ) = \delta^{r}( \theta ) \delta^{s}( \phi )
\end{aligned}
\end{equation}
In the following, we shall use \eqref{deltadecomp} with $ \theta = \zeta - \xi $, $ \phi = V_{2}^{T} E^{+} U_{2} U_{2}^{T} (\varrho - \omega) $. The latter shall come around thanks to the fact that $ E^{+} U_{2} U_{2}^{T} (\varrho - \omega) = V_{2} V_{2}^{T} E^{+} U_{2} U_{2}^{T} (\varrho - \omega) $. With these preparations, we can move on to compute the product of \eqref{omegaresultfin2} with
\begin{equation}\label{key}
\vert \varrho \rangle_{\dot{\mathtt{c}}} = \sum_{\zeta} \vert V_{1} \zeta + E^{+} U_{2} U_{2}^{T} \varrho \rangle_{\mathtt{c}} ~ e^{i B} e^{i \left( -\frac{1}{2} \zeta^{T} V_{1}^{T} F^{+} E F^{T} F^{+T} V_{1} \zeta + \zeta^{T} V_{1}^{T} F^{+} \varrho \right) }
\end{equation}
We write
\begin{equation}\label{varrhoomprod}
\begin{aligned}
_{\dot{\mathtt{c}}}\langle \varrho \vert \omega \rangle_{\dot{\mathtt{c}}} &= \sum_{\zeta} \sum_{\xi} ~ _{\mathtt{c}}\langle V_{1} \zeta + E^{+}U_{2}U_{2}^{T} \varrho \vert V_{1} \xi + E^{+} U_{2} U_{2}^{T} \omega  \rangle_{\mathtt{c}} ~  e^{i \left( C - \overline{B} \right)} \\ & \qquad e^{-i \left(- \frac{1}{2} \zeta^{T} V_{1}^{T} F^{+} E F^{T} F^{+T} V_{1} \zeta + \zeta^{T} V_{1}^{T} F^{+}\varrho \right) } ~ e^{i \left( -\frac{1}{2} \xi^{T} V_{1}^{T} F^{+} E F^{T} F^{+T} V_{1} \xi + \xi^{T} V_{1}^{T} F^{+}\omega \right) } = \\
&= \sum_{\zeta} \sum_{\xi} \delta^{r}(\zeta - \xi)  \delta^{s}(  V_{2}^{T} E^{+} U_{2} U_{2}^{T} (\varrho - \omega) ) e^{i \left( C - \overline{B} \right)} \\ & \qquad e^{-i \left( -\frac{1}{2} \zeta^{T} V_{1}^{T} F^{+} E F^{T} F^{+T} V_{1} \zeta + \zeta^{T} V_{1}^{T} F^{+}\varrho \right) } ~ e^{i \left( -\frac{1}{2} \xi^{T} V_{1}^{T} F^{+} E F^{T} F^{+T} V_{1} \xi + \xi^{T} V_{1}^{T} F^{+}\omega \right) } = \\
&= \sum_{\xi} \delta^{s}( V_{2}^{T} E^{+} U_{2} U_{2}^{T} (\varrho - \omega) ) e^{i \left( C - \overline{B} \right)} ~ e^{i  \xi^{T} V_{1}^{T} F^{+} (\omega - \varrho)  } = \\
&= \delta^{s}( V_{2}^{T} E^{+} U_{2} U_{2}^{T} (\varrho - \omega) ) e^{i \left( C - \overline{B} \right)} (2\pi)^{r} \delta^{r}( V_{1}^{T} F^{+} ( \varrho - \omega) )
\end{aligned}
\end{equation}
Let us see what the two conditions provided in the Dirac deltas say about $ \eta \equiv \varrho - \omega $. The wavefunction shall be nonzero if and only if both $ V_{2}^{T} E^{+} U_{2} U_{2}^{T} \eta = 0 $ and $ V_{1}^{T} F^{+} \eta = 0 $ hold. These equations are equivalent to $  E^{+} ( \mathbf{1} - F F^{+} ) \eta = 0 $ and $ F^{+} \eta = 0 $. The first equation demands that $ ( \mathbf{1} - F F^{+} ) \eta \in \mathcal{N}(E^{T}) $. At the same time, $ ( \mathbf{1} - F F^{+} ) \eta \in \mathcal{N}(F^{T}) $ because $ ( \mathbf{1} - F F^{+} ) $ is a projector to $ \mathcal{N}(F^{T}) $. From these we get (see the proof of Observation \ref{NETNFT}) that $( \mathbf{1} - F F^{+} ) \eta = 0 $. Together with the second equation, which implies $ F F^{+} \eta = 0 $, this tells us that $ \eta = 0 $. We therefore find that up to a constant, which can be removed by a suitable choice of $ C $ and $ B $, the wavefunction is
\begin{equation}\label{inprovarrhoomega}
_{\dot{\mathtt{c}}}\langle \varrho \vert \omega \rangle_{\dot{\mathtt{c}}} = \delta^{q}(\varrho - \omega)
\end{equation}
as expected.

Once we have established the wavefunctions \eqref{inprovarrhoomega}, we have at our disposal the resolution of identity;
\begin{equation}\label{key}
\vert \varrho \rangle_{\dot{\mathtt{c}}} = \sum_{\omega} \delta^{q}(\omega - \varrho) \vert \omega \rangle_{\dot{\mathtt{c}}} = \sum_{\omega} \vert \omega \rangle_{\dot{\mathtt{c}}} \phantom{.}_{\dot{\mathtt{c}}}\langle \omega \vert \varrho \rangle_{\dot{\mathtt{c}}} 
\end{equation}
from where we get
\begin{equation}\label{key}
\mathbf{1} = \sum_{\omega} \vert \omega \rangle_{\dot{\mathtt{c}}} \phantom{.}_{\dot{\mathtt{c}}}\langle \omega \vert
\end{equation}
However, we must not forget that this formula only holds if the eigenstates $ \vert \omega \rangle_{\dot{\mathtt{c}}} $ are correctly normalized according to \eqref{inprovarrhoomega}.

\subsection{The Transformation of Momenta}

So far we have computed the eigenstates $ \vert \omega \rangle_{\dot{\mathtt{c}}} $ of the coordinate observables $ \hat{w}_{A}  $. We now turn our attention to the analogical problem
\begin{equation}\label{acuteomegadef}
\hat{w}_{A+q} \vert \omega \rangle_{\dot{\mathtt{m}}} = \omega_{A} \vert \omega \rangle_{\dot{\mathtt{m}}}
\end{equation}
for momentum observables
\begin{equation}\label{key}
\hat{w}_{A+q} = W_{A+q J} ~ \hat{y}_{J}
\end{equation}
The solution too shall be analogical. From \eqref{yder}, one gets
\begin{equation}\label{momonbasis2}
~_{\mathtt{c}}\langle \alpha \vert \hat{y}_{A+q} \vert \omega \rangle_{\dot{\mathtt{m}}} = - i \frac{\partial}{\partial \alpha_{A}} ~_{\mathtt{c}}\langle \alpha \vert \omega \rangle_{\dot{\mathtt{m}}}
\end{equation}
and the defining relation \eqref{acuteomegadef} for $ \vert \omega \rangle_{\dot{\mathtt{m}}} $ can be rewritten as
\begin{equation}\label{eigenvalinbasis2}
~_{\mathtt{c}}\langle \alpha \vert \hat{w}_{A+q} \vert \omega \rangle_{\dot{\mathtt{m}}} = \omega_{A} ~_{\mathtt{c}}\langle \alpha \vert \omega \rangle_{\dot{\mathtt{m}}}
\end{equation}
It follows from \eqref{momonbasis2} and \eqref{eigenvalinbasis2} that
\begin{equation}\label{key}
G_{AB} \alpha_{B} ~_{\mathtt{c}}\langle \alpha \vert \omega \rangle_{\dot{\mathtt{m}}} - H_{AB} ~ i \frac{\partial}{\partial \alpha_{B}}  ~_{\mathtt{c}}\langle \alpha \vert \omega \rangle_{\dot{\mathtt{m}}} = \omega_{A} ~_{\mathtt{c}}\langle \alpha \vert \omega \rangle_{\dot{\mathtt{m}}}
\end{equation}
In case that $ _{\mathtt{c}}\langle \alpha \vert \omega \rangle_{\dot{\mathtt{m}}} \neq 0 $, we divide by it and get (in matrix notation)
\begin{equation}\label{EFalom2}
G \alpha + H \acute{a} = \omega
\end{equation}
while denoting
\begin{equation}\label{an2}
\acute{a}_{B} = - i \frac{\partial}{\partial \alpha_{B}} \ln ~_{\mathtt{c}}\langle \alpha \vert \omega \rangle_{\dot{\mathtt{m}}}
\end{equation}

The equation \eqref{EFalom2} then can be treated in complete analogy with the previous paragraph. This time it is important that the matrix $ G H^{T} $ is symmetric, and we may take advantage of the following:
\begin{observation}
	It holds $ \mathcal{N}(H^{T}) \cap \mathcal{N}(G^{T}) = \{ 0 \} $.
\end{observation}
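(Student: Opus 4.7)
The plan is to mirror the argument of Observation \ref{NETNFT}, simply exchanging the roles of the matrix pairs. There, the needed identity was $HE^{T} - GF^{T} = \mathbf{1}$, i.e. the transpose of the relation in condition 2 of Theorem \ref{sympl}, which was applied to a vector lying in $\mathcal{N}(E^{T}) \cap \mathcal{N}(F^{T})$ to force it to be zero. Here we want the same kind of conclusion but with $H^{T}$ and $G^{T}$ in place of $E^{T}$ and $F^{T}$. So the first thing I would do is look for a symplectic identity in which $H^{T}$ and $G^{T}$ already appear on the right of the matrix products $E, F$, rather than being hit from the right by them.

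That identity is already listed as condition 3 of Theorem \ref{sympl}: for a symplectic $W$, one has $E H^{T} - F G^{T} = \mathbf{1}$. This is exactly what is needed. The proof then runs in one line: assume $\iota \in \mathcal{N}(H^{T}) \cap \mathcal{N}(G^{T})$, so that $H^{T} \iota = 0$ and $G^{T} \iota = 0$; applying the identity to $\iota$ gives $\iota = E H^{T} \iota - F G^{T} \iota = E \cdot 0 - F \cdot 0 = 0$. Hence $\mathcal{N}(H^{T}) \cap \mathcal{N}(G^{T}) = \{0\}$, as claimed.

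There is essentially no obstacle; the only thing to double-check is that we pick the correct member of the two equivalent characterizations in Theorem \ref{sympl}. Observation \ref{NETNFT} used the transpose of condition 2 because $E^{T}$ and $F^{T}$ had to appear on the right of the products; now we need $H^{T}$ and $G^{T}$ on the right, and condition 3 provides them directly, so no transposition is needed. Once this identity is in hand, the rest is the same one-step argument used before, and it can be reused verbatim in the downstream analysis of the momentum eigenstates $\vert \omega \rangle_{\dot{\mathtt{m}}}$ — in particular, it will serve to guarantee that the analogue of the constraint condition \eqref{omegaconstr} (with $G$ and $H$ replacing $E$ and $F$) is automatically satisfied.
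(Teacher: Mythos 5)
Your proof is correct and matches the paper's own argument exactly: both invoke the identity $E H^{T} - F G^{T} = \mathbf{1}$ from condition 3 of Theorem \ref{sympl} and apply it to a vector in $\mathcal{N}(H^{T}) \cap \mathcal{N}(G^{T})$ to conclude it vanishes. Nothing to add.
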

\begin{proof}
	We recall the identity in condition 3. of Theorem \ref{sympl} which states $ E H^{T} - F G^{T} = \mathbf{1} $, and act with both right and left hand side on $ \nu \in \mathcal{N}(H^{T}) \cap \mathcal{N}(G^{T}) $, getting $ \nu = 0 $.
\end{proof}

\begin{observation}
	It holds $ \mathcal{N}(H) \cap \mathcal{N}(G) = \{ 0 \} $.
\end{observation}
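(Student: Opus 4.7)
My plan is to mirror the proof of Observation~\ref{nzero}, with the pair $(E,F)$ replaced by $(H,G)$ in an appropriate sense. Theorem~\ref{sympl} treats the two block columns of $W$ symmetrically once one is willing to use not only the identities stated there but also their transposes (equivalently, the corresponding block identities coming from $W^{-1}W = \mathbf{1}$ with $W^{-1}$ as in \eqref{W-1}).

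Concretely, suppose $\nu \in \mathcal{N}(H) \cap \mathcal{N}(G)$, i.e.\ $H\nu = G\nu = 0$. First I would apply the identity $E^{T}H - G^{T}F = \mathbf{1}$ from condition 2 of Theorem~\ref{sympl} to $\nu$; the hypothesis $H\nu = 0$ collapses it to $-G^{T}F\nu = \nu$, so that $\nu \in \mathcal{R}(G^{T})$. Next, I would apply the transposed identity $H^{T}E - F^{T}G = \mathbf{1}$ to $\nu$ and use $G\nu = 0$ to obtain $H^{T}E\nu = \nu$, so that $\nu \in \mathcal{R}(H^{T})$ as well. Since $\mathcal{N}(H) \perp \mathcal{R}(H^{T})$ and $\mathcal{N}(G) \perp \mathcal{R}(G^{T})$, either containment on its own forces $\nu = 0$.

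The main (and rather minor) obstacle is simply recognising that the needed identity $H^{T}E - F^{T}G = \mathbf{1}$ is not literally one of the three items listed in Theorem~\ref{sympl}; however it is nothing but the transpose of the identity in condition 2, or equivalently the top-left block of $W^{-1}W = \mathbf{1}$ computed with the formula \eqref{W-1}. Once this is acknowledged, the argument is a direct transcription of the proof of Observation~\ref{nzero}, with $E \leftrightarrow H^{T}$ and $F \leftrightarrow -G^{T}$ playing symmetric roles, and no further input beyond the symplectic conditions is required.
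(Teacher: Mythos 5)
Your proof is correct and takes essentially the same route as the paper: the paper applies the identity $E^{T}H - G^{T}F = \mathbf{1}$ to $\mu \in \mathcal{N}(H) \cap \mathcal{N}(G)$, obtains $-G^{T}F\mu = \mu$, hence $\mu \in \mathcal{R}(G^{T})$, and concludes $\mu = 0$ from $\mathcal{R}(G^{T}) \perp \mathcal{N}(G)$ — which is exactly your first containment. Your second containment $\nu \in \mathcal{R}(H^{T})$ via the transposed identity is also valid but redundant, since, as you note yourself, either containment alone suffices.
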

\begin{proof}
	We recall the identity in condition 2. of Theorem \ref{sympl} which states $ E^{T} H - G^{T} F = \mathbf{1} $, and act with both right and left hand side on $ \mu \in \mathcal{N}(H) \cap \mathcal{N}(G) $, getting $ - G^{T} F \mu = \mu $. It follows that $ \mu \in \mathcal{R}(G^{T}) $, but since $ \mathcal{R}(G^{T}) \perp \mathcal{N}(G) $, we get $ \mu = 0 $.
\end{proof}

With these in place, the solution of \eqref{EFalom2} comes around in the same form as \eqref{omegaresultfin},
\begin{equation}\label{omegaacuteresultfin}
\begin{aligned}
\vert \omega \rangle_{\dot{\mathtt{m}}} &= \sum_{\beta} \vert \beta + G^{+} \acute{U}_{2} \acute{U}_{2}^{T} \omega \rangle_{\mathtt{c}} ~ e^{i K} e^{i \left( -\frac{1}{2} \beta^{T} H^{+} G H^{T} H^{+T} \beta + \beta^{T} H^{+} \omega \right) }
\\ & \text{ with } \beta \in \mathcal{R}(H^{T})
\end{aligned}
\end{equation}
For brevity, we denote $ \acute{U}_{2} \equiv U_{2}(H) $, $ \acute{V}_{1} \equiv V_{1}(H) $, etc. The alternative form with an explicit integration over $ \xi \in \mathbb{R}^{\acute{r}} $ where $ \acute{r} \equiv \text{rank}(H) $ is
\begin{equation}\label{omegaacuteresultfin2}
\vert \omega \rangle_{\dot{\mathtt{m}}} = \sum_{\xi} \vert \acute{V}_{1} \xi + G^{+} \acute{U}_{2} \acute{U}_{2}^{T} \omega \rangle_{\mathtt{c}} ~ e^{i K} e^{i \left( -\frac{1}{2} \xi^{T} \acute{V}_{1}^{T} H^{+} G H^{T} H^{+T} \acute{V}_{1} \xi + \xi^{T} \acute{V}_{1}^{T} H^{+} \omega \right) }
\end{equation}

The computation of wavefunctions from the preceding paragraph carries over to this case, too. Thus we get
\begin{equation}\label{varrhoomprodacute}
_{\dot{\mathtt{m}}}\langle \varrho \vert \omega \rangle_{\dot{\mathtt{m}}} = \delta^{\acute{s}}( \acute{V}_{2}^{T} G^{+} \acute{U}_{2} \acute{U}_{2}^{T} (\varrho - \omega) ) e^{i \left( K - \overline{L} \right)} (2\pi)^{\acute{r}} \delta^{\acute{r}}( \acute{V}_{1}^{T} H^{+} ( \varrho - \omega) )
\end{equation}
with $ \acute{r} \equiv \text{rank}(H) $ and $ \acute{s} \equiv q - \acute{r} $. It follows that there are constants $ K $ and $ L $ serving as normalization of $ \vert \omega \rangle_{\dot{\mathtt{m}}} $ and $ \vert \varrho \rangle_{\dot{\mathtt{m}}} $, respectively, such that
\begin{equation}\label{inprovarrhoomegaacute}
_{\dot{\mathtt{m}}}\langle \varrho \vert  \omega \rangle_{\dot{\mathtt{m}}} = \delta^{q}(\varrho - \omega)
\end{equation}
With this normalization in place, one has the resolution of identity
\begin{equation}\label{key}
\mathbf{1} = \sum_{\omega} \vert \omega \rangle_{\dot{\mathtt{m}}} ~_{\dot{\mathtt{m}}}\langle \omega \vert
\end{equation}

\section{Examples}
In this section we offer a handful of special cases of the $ \hat{w} = W \hat{y} $ transformation defined by a symplectic matrix $ W $ and test our results on them. We start with two very prominent choices and add another one to illustrate the differences in their behavior.

\begin{example}\label{E1}
	The first prominent case is
	\begin{equation}\label{W1}
	W = \begin{pmatrix}
	O & 0 \\
	0 & O
	\end{pmatrix}
	\end{equation}
	where $ O \in \mathbb{R}^{q \times q} $ is an \textit{orthogonal} matrix. One easily checks that $ W $ is symplectic. The transformation does not mix coordinates and momenta, which makes it exceptionally simple. Let us apply our analysis to \eqref{W1}. We plug $ E = O $, $ F = 0 $ into \eqref{omegaresultfin} and observe that because $ \mathcal{R}(F^{T}) = \{ 0 \} $, it holds $ \beta = 0 $. Also, $ U_{2} U_{2}^{T} = \mathbf{1} $. The integration is therefore trivial and we are left only with
	\begin{equation}\label{key}
	\vert \omega \rangle_{\dot{\mathtt{c}}} = \vert O^{T} \omega \rangle_{\mathtt{c}} ~ e^{i C}
	\end{equation}
	We choose normalization by fixing $ C = 0 $ to end up with the result
	\begin{equation}\label{omgw1}
	\vert \omega \rangle_{\dot{\mathtt{c}}} = \vert O^{T} \omega \rangle_{\mathtt{c}}
	\end{equation}
	Since
	\begin{equation}\label{key}
	\hat{w} \vert \omega \rangle_{\dot{\mathtt{c}}} = O \hat{x} \vert O^{T} \omega \rangle_{\mathtt{c}} = O O^{T} \omega \vert O^{T} \omega \rangle_{\mathtt{c}} = \omega \vert \omega \rangle_{\dot{\mathtt{c}}}
	\end{equation}
	the result is obviously correct.
	
	Next, let us look at the momenta. We plug $ G = 0 $, $ H = O $ into \eqref{omegaacuteresultfin}, which implies $ \beta = \alpha $ and $ \acute{U}_{2} \acute{U}_{2}^{T} = 0 $. In result, we get
	\begin{equation}\label{key}
	\vert \omega \rangle_{\dot{\mathtt{m}}} = \sum_{\alpha} \vert \alpha \rangle_{\mathtt{c}} ~ e^{i K} e^{i  \alpha^{T} O^{T} \omega }
	\end{equation}
	In this case we opt for the normalization $ e^{i K} = (2\pi)^{-q/2} $, obtaining
	\begin{equation}\label{key}
	\vert \omega \rangle_{\dot{\mathtt{m}}} = \sum_{\alpha} \vert \alpha \rangle_{\mathtt{c}} ~ (2\pi)^{-q/2} ~ e^{i \omega^{T} O \alpha }
	\end{equation}
	In the trivial case $ O = \mathbf{1} $, one reproduces the transformation between the coordinate eigenstates $ \vert \alpha \rangle_{\mathtt{c}} $ and the momentum eigenstates $ \vert \omega \rangle_{\dot{\mathtt{m}}} \equiv \vert \beta \rangle_{\mathtt{m}} $ in the form
	\begin{equation}\label{acutebeta}
	\vert \beta \rangle_{\mathtt{m}} = \sum_{\alpha} \vert \alpha \rangle_{\mathtt{c}} ~ (2\pi)^{-q/2} ~ e^{i \beta^{T} \alpha}
	\end{equation}
	which is a direct consequence of \eqref{formfieldmom}.
\end{example}

\vspace{\baselineskip}

\begin{example}\label{E2}
	The second prominent case occurs when
	\begin{equation}\label{W2}
	W = \begin{pmatrix}
	0 & O \\
	- O & 0
	\end{pmatrix}
	\end{equation}
	again with $ O \in \mathbb{R}^{q \times q} $ orthogonal. This as well is a symplectic matrix, and we see that the resulting transformation effectively exchanges coordinates with (a mixture of) momenta, and vice versa. Up to this exchange, one expects to obtain similar results to those in Example \ref{E1}.
	
	Upon plugging $ E = 0 $, $ F = O $ into \eqref{omegaresultfin}, we get $ \mathcal{R}(F^{T}) = \mathbb{R}^{q} $ and so $ \beta = \alpha $. Then
	\begin{equation}\label{key}
	\vert \omega \rangle_{\dot{\mathtt{c}}} = \sum_{\alpha} \vert \alpha \rangle_{\mathtt{c}} ~ e^{i C} e^{i \alpha^{T} O^{T}\omega }
	\end{equation}
	We shall again normalize with the choice $ e^{i C} = (2\pi)^{-q/2} $, obtaining
	\begin{equation}\label{omegamom}
	\vert \omega \rangle_{\dot{\mathtt{c}}} = \sum_{\alpha} \vert \alpha \rangle_{\mathtt{c}} ~ (2\pi)^{-q/2} ~ e^{ i \omega^{T} O \alpha }
	\end{equation}
	For the typical case $ O = \mathbf{1} $, each coordinate value is simply replaced with its corresponding momentum, and one may identify $ \vert \omega \rangle_{\dot{\mathtt{c}}} = \vert \omega \rangle_{\mathtt{m}} $. The resulting form
	\begin{equation}\label{key}
	\vert \omega \rangle_{\mathtt{m}} = \sum_{\alpha} \vert \alpha \rangle_{\mathtt{c}} ~ (2\pi)^{-q/2} ~ e^{i \omega^{T} \alpha}
	\end{equation}
	is identical to \eqref{acutebeta} as one expects.
	
	As for momenta, deploying $ G = -O $ and $ H = 0 $ on \eqref{omegaacuteresultfin} renders $ \beta = 0 $ and $ \acute{U}_{2} \acute{U}_{2}^{T} = \mathbf{1} $. This, together with the choice $ K = 0 $, leaves us with the result
	\begin{equation}\label{key}
	\vert \omega \rangle_{\dot{\mathtt{m}}} = \vert - O^{T} \omega \rangle_{\mathtt{c}}
	\end{equation}
	analogical to \eqref{omgw1}. The reader can easily verify that it is correct. Here the choice $ O = \mathbf{1} $ yields $ \vert \omega \rangle_{\dot{\mathtt{m}}} = - \vert \omega \rangle_{\mathtt{c}} $.
\end{example}

\vspace{\baselineskip}

\begin{example}\label{E3}
	Consider the matrix
	\begin{equation}\label{W3}
	W = \frac{1}{\sqrt{2}} \begin{pmatrix}
	\phantom{-} \mathbf{1} & \mathbf{1} \\
	- \mathbf{1} & \mathbf{1}
	\end{pmatrix}
	\end{equation}
	It is symplectic, and for the first time introduces a non-trivial mixing of coordinates and momenta. We designed the matrix to have the simplest regular blocks $ E = F = \frac{1}{\sqrt{2}} \mathbf{1} $ possible. We have $ \mathcal{R}(F^{T}) = \mathbb{R}^{q} $. It follows that $ \beta = \alpha $ as in Example \ref{E2}, but the quadratic term in the integrand does not vanish this time, instead one obtains
	\begin{equation}\label{key}
	\vert \omega \rangle_{\dot{\mathtt{c}}} = \sum_{\alpha} \vert \alpha \rangle_{\mathtt{c}} ~ e^{i C} e^{i \left( -\frac{1}{2} \alpha^{T} \alpha + \sqrt{2} \alpha^{T} \omega \right) }
	\end{equation}
	If one uses the identity
	\begin{equation}\label{key}
	\delta^{q}(k x) = \lvert k \rvert^{-q} ~ \delta^{q}(x) 
	\end{equation}
	for $ k \in \mathbb{R} $, $ x \in \mathbb{R}^{q} $ within \eqref{varrhoomprod}, one finds that $ e^{i C} = ( \sqrt{2} \pi )^{-q/2} $ is the correct normalization of $ \vert \omega \rangle_{\dot{\mathtt{c}}} $.
	
	For momenta the situation is alike. It holds $ \beta = \alpha $ and $ \acute{U}_{2} \acute{U}_{2}^{T} = 0 $, and we find
	\begin{equation}\label{key}
	\vert \omega \rangle_{\dot{\mathtt{m}}} = \sum_{\alpha} \vert \alpha \rangle_{\mathtt{c}} ~ e^{i K} e^{i \left( \frac{1}{2} \alpha^{T} \alpha + \sqrt{2} \alpha^{T} \omega \right) }
	\end{equation}
	i.e., the only difference between the coordinate and momentum eigenstates is the sign of the quadratic term in the exponent. From \eqref{varrhoomprodacute}, we get the same normalization $ e^{i K} = ( \sqrt{2} \pi )^{-q/2} $ as above.
	
	Eventually, let us establish the wavefunctions of the momentum eigenstates. We shall use the formula
	\begin{equation}\label{iAJ}
	\int_{\mathbb{R}^{n}} e^{i \left( \frac{1}{2} x^{T} A x + J^{T} x \right)} d^{n} x = (2 \pi i)^{n/2} \left( \det A \right)^{-1/2} e^{-i \frac{1}{2} J^{T} A^{-1} J}
	\end{equation}
	for $ A \in \mathbb{R}^{n \times n} $ a real, symmetric, invertible matrix and $ J \in \mathbb{R}^{n} $. This is a multi-dimensional version of a formula which can be found in Supplement I of \cite{Chaichian2001}. Strictly speaking, the integral in \eqref{iAJ} is divergent; the formula only holds in the sense of regularization which is done by including the real term $ - \eta x^{T} x $ in the exponent and taking $ \lim_{\eta \rightarrow 0^{+}} $. With this help, we are able to compute
	\begin{equation}\label{key}
	\begin{aligned}
	_{\dot{\mathtt{m}}} \langle \varrho \vert \omega \rangle_{\dot{\mathtt{c}}} &= \sum_{\gamma} \sum_{\alpha} ~_{\mathtt{c}} \langle \gamma \vert \alpha \rangle_{\mathtt{c}} ~ ( \sqrt{2} \pi )^{q} ~ e^{-i \left( \frac{1}{2} \gamma^{T} \gamma + \sqrt{2} \gamma^{T} \varrho \right) } e^{i \left(- \frac{1}{2} \alpha^{T} \alpha + \sqrt{2} \alpha^{T} \omega \right) } = \\
	&= \sum_{\alpha} ( \sqrt{2} \pi )^{q} ~ e^{-i \left( \frac{1}{2} \alpha^{T} \alpha + \sqrt{2} \alpha^{T} \varrho \right) } e^{i \left(-\frac{1}{2} \alpha^{T} \alpha + \sqrt{2} \alpha^{T} \omega \right) } = \\
	&= \sum_{\alpha} ( \sqrt{2} \pi )^{q} ~ e^{i \left(- \alpha^{T} \alpha + \sqrt{2} \alpha^{T} (\omega - \varrho ) \right) } = \\
	&= (-i2 \pi^{3})^{q/2} ~ e^{ i\frac{1}{2} (\omega - \varrho )^{T} (\omega - \varrho )}
	\end{aligned}
	\end{equation}
	One can see that in the special case $ \omega = \varrho $, the product $ _{\dot{\mathtt{m}}} \langle \omega \vert \omega \rangle_{\dot{\mathtt{c}}} $ is constant. This behavior is quite different from that of the original coordinate and momentum eigenstates, where $ _{\mathtt{c}} \langle \alpha \vert \beta \rangle_{\mathtt{c}} = (2\pi)^{-q/2} ~ e^{i \beta^{T} \alpha } $.
\end{example}

\section{Conclusion}
This paper was designated to study quantum-mechanical observables under a symplectic transformation of coordinates. We assumed to be given a classical system with a configuration space isomorphic to $ \mathbb{R}^{q} $ (e.g. a set of finitely many coupled harmonic oscillators), and introduced in a standard manner its quantum analogue. The correspondence between the two systems was made clear. Using the rigged Hilbert space formalism, we gave correct meaning to the Dirac notation, and defined eigenstates of the quantum observables. These are coordinate-dependent, since they measure values of coordinates and momenta in a chosen symplectic basis. It is then natural to ask what happens if one chooses \textit{another symplectic basis} in the phase space, which gives rise to a new, symplectically transformed, set of observables on the Hilbert space. The main goal of this paper was to present a computation of the eigenstates of observables under such symplectic transformation. We search for them in terms of the original coordinate eigenstate basis, using the Dirac formalism.

The results are the following. Suppose that $ \hat{y}_{A} $ and $ \hat{y}_{A+q} $ with $ A = 1, ..., q $ are the coordinate and momentum observables, respectively, corresponding to the coordinates $ y_{A} $ and momenta $ y_{A+q} $ of a point $ y = e_{I} y_{I} $ in the phase space, with an implicit summation over $ I = 1, ..., 2q $. Define their eigenstates by $ \hat{y}_{A} \vert \alpha \rangle_{\mathtt{c}} = \alpha_{A} \vert \alpha \rangle_{\mathtt{c}} $ and $ \hat{y}_{A+q} \vert \beta \rangle_{\mathtt{m}} = \beta_{A} \vert \beta \rangle_{\mathtt{m}} $. Then assume the transformation
\begin{equation}\label{key}
\hat{w}_{I} = W_{IJ} ~ \hat{y}_{J}
\end{equation}
with a symplectic $ 2q \times 2q $ matrix
\begin{equation}\label{key}
W = \begin{pmatrix}
E & F \\
G & H
\end{pmatrix}
\end{equation}
and define the new eigenstates by $ \hat{w}_{A} \vert \omega \rangle_{\dot{\mathtt{c}}} = \omega_{A} \vert \omega \rangle_{\dot{\mathtt{c}}} $ and $ \hat{w}_{A+q} \vert \varrho \rangle_{\dot{\mathtt{m}}} = \varrho_{A} \vert \varrho \rangle_{\dot{\mathtt{m}}} $. Then they can be expressed in the coordinate eigenstate basis as
\begin{equation}\label{key}
\vert \omega \rangle_{\dot{\mathtt{c}}} = \int_{\mathbb{R}^{r}} d^{r}\xi ~ \vert V_{1} \xi + E^{+} U_{2} U_{2}^{T} \omega \rangle_{\mathtt{c}} ~ e^{i C} e^{i \left(-\frac{1}{2} \xi^{T} V_{1}^{T} F^{+} E F^{T} F^{+T} V_{1} \xi + \xi^{T} V_{1}^{T} F^{+}\omega \right) }
\end{equation}
\begin{equation}\label{key}
\vert \varrho \rangle_{\dot{\mathtt{m}}} = \int_{\mathbb{R}^{\acute{r}}} d^{\acute{r}}\chi ~ \vert \acute{V}_{1} \chi + G^{+} \acute{U}_{2} \acute{U}_{2}^{T} \varrho \rangle_{\mathtt{c}} ~ e^{i K} e^{i \left(- \frac{1}{2} \chi^{T} \acute{V}_{1}^{T} H^{+} G H^{T} H^{+T} \acute{V}_{1} \chi + \chi^{T} \acute{V}_{1}^{T} H^{+} \varrho \right) }
\end{equation}
Here, $ \omega \in \mathbb{R}^{q} $ is a $ q $-tuple of eigenvalues describing the eigenstate, $ r \equiv \text{rank}(F) $ and $ V_{1} \equiv V_{1}(F) $, $ U_{2} \equiv U_{2}(F) $ are matrices associated to $ F $ via the narrowed singular value decomposition. Similarly, $ \varrho \in \mathbb{R}^{q} $, $ \acute{r} \equiv \text{rank}(H) $ and $ \acute{V}_{1} \equiv V_{1}(H) $, $ \acute{U}_{2} \equiv U_{2}(H) $. We also check explicitly that upon choosing suitable normalization constants $ C, K \in \mathbb{C} $, one gets the same orthogonality relations $ _{\dot{\mathtt{c}}}\langle \varrho \vert \omega \rangle_{\dot{\mathtt{c}}} = \delta^{q}(\varrho - \omega) $ and $ _{\dot{\mathtt{m}}}\langle \varrho \vert  \omega \rangle_{\dot{\mathtt{m}}} = \delta^{q}( \varrho -  \omega ) $ as one had in the original basis. This implies the standard form of resolutions of identity.

\section*{Acknowledgments}
This work was supported by Charles University Grant Agency [Project No. 906419].

\bibliographystyle{unsrt}
\renewcommand{\bibname}{Bibliography}
\bibliography{bibliography}

\end{document}